\pgfplotsset{every tick label/.append style={font=\small}}
\newtheorem{theorem}{\bf Theorem} \newtheorem{definition}[theorem]{\bf Definition} 
\newtheorem{lemma}[theorem]{\bf Lemma} \newtheorem{remark}[theorem]{\bf Remark}
 \newtheorem{proposition}[theorem]{\bf Proposition} 
\newtheorem{assumption}[theorem]{\bf Assumption}  
\newtheorem{Algorithm}[theorem]{\bf Algorithm}
\newcommand{\X}{\mathbb{X}}
\newcommand{\U}{\mathbb{U}}
\newcommand{\R}{\mathbb{R}}
\newcommand{\W}{\mathbb{W}}
\newcommand{\V}{\mathbb{V}}
\newcommand{\N}{\mathbb{N}}
\colorlet{istorange}{orange}
\colorlet{istgreen}{green!50!black}
\colorlet{istblue}{blue} 
\colorlet{istred}{red!90!black}
\title{\LARGE \bf Uncertainties and output feedback in rollout event-triggered control$^*$}
\author{Stefan Wildhagen and Frank Allg{\"o}wer
	\thanks{$^*$Funded by Deutsche Forschungsgemeinschaft (DFG, German Research Foundation) under Germany's Excellence Strategy - EXC 2075 - 390740016 and under grant AL 316/13-2 - 285825138. The authors are with the University of Stuttgart, Institute for Systems Theory and Automatic Control, Germany.
		{\tt\small \{wildhagen}{\tt\small,allgower\}@ist.uni}{\tt\small -stuttgart.de}. Corresponding author: Stefan Wildhagen.}  %
}
\begin{document}

	\maketitle
	\thispagestyle{empty}
	\pagestyle{empty}

	\begin{abstract}
		The fact that event-triggered control (ETC) often exhibits an improved performance-communication tradeoff over time-triggered control renders it especially useful for Networked Control Systems (NCSs). However, it has proven difficult to characterize the traffic produced by ETC a priori. Rollout ETC addresses this issue by using a triggering and control law that is \emph{implicitly} defined by the solution to an optimal control problem (OCP), instead of an explicit one as in classical ETC. This allows to directly incorporate predefined constraints on the transmission traffic as well as on states and inputs. In this article, we examine the practically relevant case when output instead of state measurements are available, and measurements as well as the LTI plant are subject to uncertainties. To address these challenges, we adapt methods from robust tube-based model predictive control and propose three different strategies to implement an error feedback in an NCSs setup, the applicability of which depends on the capabilities of the actuator. We establish recursive feasibility, robust constraint satisfaction and convergence. Finally, we illustrate our results in a numerical example.
	\end{abstract}
	
	\begin{IEEEkeywords}
		Event-triggered control,  Model Predictive Control, Networked Control Systems, Stochastic/Uncertain Systems.
	\end{IEEEkeywords}
	
	
	\section{Introduction}
	
	\IEEEPARstart{E}{vent}-triggered control (ETC) (see \cite{Heemels12} for an overview) has become popular in recent years as a strategy to tackle the challenges posed by Networked Control Systems (NCSs). The main paradigm of ETC is that a state-dependent trigger condition is monitored during runtime to determine when a control update should be transmitted. If well-designed, ETC exhibits a better performance-communication tradeoff than time-triggered control (TTC), such that communication is reduced. Therefore, so a common argument, ETC avoids congestion of the communication medium and thus leads to low delays and packet dropout probabilities.
	
	However, it is in fact hard to characterize the transmission traffic pattern produced by a given ETC system a priori, as remarked in the recent work \cite{postoyan2019inter}, ``very little is known about the actual behavior of the inter-event times.'' Only recently, there have been several works on this matter, but the results obtained thus far consider special system classes and trigger rules \cite{postoyan2019inter,gleizer2020trafficmodels} or suffer from high computational complexity \cite{kolarijani2016formal,delimpaltadakis2020traffic}. An especially challenging task is to \emph{design} ETC mechanisms which fulfill predefined requirements on the transmission traffic, e.g., a certain bandwidth limit. Achieving this typically requires numerous iterations of numerical simulation or formal traffic pattern analysis, and a subsequent redesign of the trigger and control law. Consequently, a guarantee that ETC avoids congestion is in fact not straightforward and requires great attention in the design process.
	
	Rollout ETC \cite{Antunes14} has the potential to facilitate this design process, due to its ability to directly incorporate bandwidth constraints in the trigger condition. In contrast to classical ETC, where the trigger and control law is given in an \emph{explicit} manner, the trigger and control law in rollout ETC is \emph{implicitly} defined by the solution to an optimal control problem (OCP), maximizing the expected performance over a prediction horizon w.r.t. the triggering time instants and control updates. In rollout ETC, the state is measured periodically, the OCP is solved based on the new measurement, and then the first part of the optimized transmission schedule and input trajectory are applied in closed loop. The design of rollout ETC schemes which respect a certain bandwidth is simple: it is only required to include the desired bandwidth constraint in the OCP, and then the optimal transmission strategy is determined by the implicit trigger rule automatically. Similarly, explicit constraints on the state and input can be incorporated, which is not readily possible in TTC and classical ETC. The main idea of rollout ETC was introduced in \cite{Antunes14} for linear and unconstrained plants, and was later extended to constrained \cite{Gommans17} and nonlinear plants \cite{Wildhagen19_2} using tools from model predictive control (MPC).
	
	While uncertainty as well as output feedback in classical ETC are well-understood (see, e.g., \cite{Donkers12,Heemels13_2,Liu15,Abdelrahim17}), these issues have found little attention in the literature on rollout ETC so far. Nonetheless, they are very important for practical applicability: often, only output measurements are available instead of full state information, measurement noise arises in all applications to a greater or lesser extent, and disturbances on the plant occur due to physical disturbances, unmodeled dynamics or parameter uncertainties. At the same time, real-world plants are often subject to state and input constraints, which make safe operation under uncertainty all the more challenging. Some preliminary work on this set of issues was presented in \cite{Antunes14}, which investigated state-feedback rollout ETC for unconstrained plants subject to additive, stochastic disturbances.
	
	In this article, we qualify rollout ETC for output feedback, measurement noise and disturbances in the presence of state and input constraints. In particular, we consider an NCS where the controller is collocated with the sensor, and the controller is connected with the actuators via a bandwidth-constrained network (as in \cite{postoyan2019inter}-\cite{Liu15}). The linear time-invariant (LTI) plant is subject to bounded additive disturbances and state and input constraints, while only a noisy output of the plant is available for control. We exploit that rollout ETC is closely related to MPC, and build on existing robust tube-based MPC approaches (see, e.g., \cite{mayne2006robust_output}): First, we introduce $[1,H]$ robust control invariant (RCI) sets for NCSs, which bound the effect of the uncertainties despite the fact that transmissions might be up to $H$ time steps apart. We propose three different strategies to implement an error feedback in the considered NCS setup, each requiring different assumptions on the actuator and giving rise to a different form of the $[1,H]$ RCI set. Second, we introduce a schedule constraint in the prediction to ensure that inter-transmission intervals in closed loop are indeed no longer than $H$. Third, we propose a novel tube MPC scheme, in which the initial condition in the prediction is an optimization variable only in case there is also a transmission scheduled. We prove recursive feasibility of the proposed scheme, satisfaction of the state and input constraints despite the uncertainties and convergence to a bounded set.
	
	In our preliminary work on robust rollout ETC \cite{wildhagen2020robustrollout}, we considered a disturbance on the plant, but assumed perfect state measurements. Furthermore, we considered a zero-order hold (ZOH) actuator only. In this article, we include also output feedback and measurement noise and, in addition, treat two types of more capable actuators that are able to implement a more advanced error feedback. Further, we derive a checkable condition for existence of a $[1,H]$ RCI set for the ZOH actuator, and present an algorithm to construct this set.
	
	The remainder of this article is structured as follows. In Section \ref{sec:ncs_architecture}, we discuss the considered NCS architecture, present how to capture the bandwidth constraint, and introduce the implicit triggering and control law. In Section \ref{sec:bounding}, we bound the influence of the uncertainties and introduce the three different types of actuator. We detail the proposed robust rollout ETC algorithm in Section \ref{sec:MPC_scheme}, where we also establish recursive feasibility, robust constraint satisfaction and convergence guarantees. A numerical example is presented in Section \ref{sec:num_ex}, and a summary and outlook are given in Section \ref{sec:summary}. Some lengthy proofs are provided in the Appendix.
	
	$\N$ denotes the set of natural numbers, $\N_0\coloneqq \N\cup \{0\}$, $\N_{[a,b]}\coloneqq\N_0\cap[a,b]$ and $\N_{\ge a}\coloneqq\N_0\cap[a,\infty)$, $a,b\in\N_0$.  $I$ denote the identity matrix and $0$ the zero matrix of appropriate dimension. $A\succ0$ $(A\succeq 0)$ denotes a symmetric positive {(semi)}definite matrix. We call a compact and convex set containing the origin a $\mathcal{C}$ set. $\oplus$ denotes Minkowski set addition and $\ominus$ Pontryagin set difference. $\lor$, $\land$ denote the logical \emph{or}, \emph{and}, respectively.

	\section{NCS Architecture} \label{sec:ncs_architecture}
	
	We consider the NCS architecture depicted in Figure \ref{fig:network_tb}. The discrete-time LTI plant is subject to a disturbance $w_p$ and its measured output $y_p$ is affected by the measurement noise $v_p$. In our NCS setup, updates of the control input $u_c$ need to be transmitted to the actuator via a bandwidth-constrained communication network. This constraint is captured in a so-called token bucket traffic specification (TS) \cite{Tanenbaum11}, which restricts the time instants at which a transmission may be triggered and serves as a ``contract'' between the network and the control loop: if the assigned TS is respected by the transmission traffic, the network guarantees that there is no congestion such that dropout probabilities and delays are low \cite{zhang2012network}. The device running the rollout ETC is collocated with the sensor, which is why we speak of a ``smart sensor'' here. In each time step, it samples the output and solves an OCP to determine the control update $u_c$ and whether a transmission over the network should be triggered or not, as captured by $\gamma\in\{0,1\}$. At the other side of the network, an actuator applies the last received control update $u_c$ and an error feedback $u_e$ to the plant. Next, we detail each of the components before introducing a mathematical description of the overall NCS.
	
	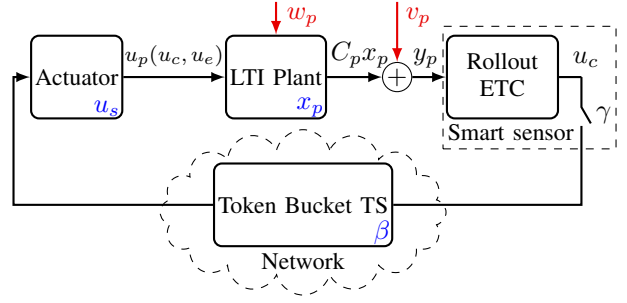
\begin{figure}
		\centering
		\begin{tikzpicture}[>= latex]
	\node [thick,draw,rectangle, inner sep=1.5pt,minimum size=1.1cm,align = center,rounded corners=3pt] (actuator) at (0.07*7,0) {\small{Actuator}};
	\node[anchor=south east,inner sep=1pt] at (actuator.south east) {\textcolor{istblue}{$u_s$}};
	\node [thick,draw,rectangle, inner sep=1.5pt,minimum size=1.1cm,align = center,rounded corners=3pt] (plant) at (0.42*7+0.2,0) {\small{LTI Plant}};
	\node[draw,circle,minimum size = 0.15cm,align=center,inner sep=0.3pt] (sum) at (0.65*7+0.2,0) {$+$};
	\node[anchor=south east,inner sep=1pt] at (plant.south east) {\textcolor{istblue}{$x_p$}};
	\node [thick,draw,rectangle, inner sep=1.5pt,minimum size = 1.1cm,align = center,rounded corners=3pt] (ctrl) at (0.85*7+0.2,0) {\small{\begin{tabular}{c} Rollout \\ ETC
			\end{tabular}}};
	\node [thin,dashed,draw,rectangle,minimum height = 1.6cm, minimum width = 2.3cm,align = left] (smartsensor) at (0.93*7,0.07*-2) {\hspace{0.0cm}\\[0.97cm] \hspace{-0.5cm}\small{Smart sensor}};
	\node [thin,dashed,draw,cloud, cloud puffs = 16,minimum height = 2.2cm, minimum width = 3.8cm,align = left] (smartsensor) at (0.5*7,-1.8) {};
	\node (netwname) at (0.5*7,-2.45) {\small{Network}};
	\node [thick,draw,rectangle, inner sep=1.5pt,minimum size = 1.1cm,align = center,rounded corners=3pt] (netw) at (0.5*7,-1.7) {\small{Token Bucket TS}};
	\node[anchor=south east,inner sep=1pt] at (netw.south east) {\textcolor{istblue}{$\beta$}};
	
	\node (upper) at (7.2,0.2*-2){};
	\node (lower) at (7.2,0.2*-2-0.25){};
	\node (lower2) at (7.2+0.25*0.5,0.2*-2 -0.25*0.866){};
	\node (dist) at (0.47*7+0.2,0.8) {\textcolor{istred}{$w_p$}};
	\node (noise) at (0.69*7+0.2,0.8) {\textcolor{istred}{$v_p$}};
	
	\path (0,0) --++ (7.2,0) node(helpne){} --++ (0,-1.7) node(helpse){} --++ (-7.2-0.05*7,0) node(helpsw){} --++ (0,1.7) node(helpnw){};
	
	\draw [->,thick] (plant.east) -- (sum.west) node[pos = 0.6,above] {$C_px_p$};
	\draw [->,thick] (sum.east) -- (ctrl.west) node[pos = 0.4,above] {$y_p$};
	\draw [thick] (ctrl.east) -- (helpne.center)node[pos=1.2,above] {$u_c$};
	\draw [thick] (helpne.center) -- (upper.center) -- (lower2.center) node[pos=0.5,right] {$\gamma$};
	\draw[thick] (lower.center) -- (helpse.center) -- (netw.east);
	\draw[->,thick] (netw.west) -- (helpsw.center) -- (helpnw.center) -- (actuator.west);
	\draw[->,thick] (actuator.east) -- (plant.west) node[pos = 0.5,above] {\footnotesize$u_p(u_c,u_e)$};
	
	\draw[->,thick,color=istred] (0.42*7+0.2,1) -- (plant.north);
	\draw[->,thick,color=istred] (0.65*7+0.2,1) -- (sum.north);
	\end{tikzpicture}
		\caption{Considered configuration of the NCS with token bucket TS.}
		\label{fig:network_tb}
	\end{figure}
	
	\subsection{Plant}
	
	The plant is given by the perturbed discrete-time LTI system
	\begin{align}
	x_p(k+1) &= A_px_p(k)+B_pu_p(k)+w_p(k), x_p(0)=x_{p,0}, \nonumber \\
	y_p(k) &= C_p x_p(k) + v_p(k),  \label{eq:plant}
	\end{align}
	where $k\in\N_{0}$ denotes time, $x_p(k)\in\R^{n_p}$ the plant state, $u_p(k)\in\R^{m_p}$ the plant input, and $y_p(k)\in\R^{q_p}$ the measured plant output. Both state and input are subject to the constraints $x_p(k)\in\X_p$ and $u_p(k)\in\U_p$, where $\X_p\subseteq\R^{n_p}$ and $\U_p\subseteq\R^{m_p}$ are closed sets containing the origin. The variables $w_p(k)\in\R^{n_p}$ and $v_p(k)\in\R^{q_p}$ represent disturbances and measurement noise, respectively. They cannot be measured, but are known to be bounded
	\begin{equation*}
	w_p(k)\in \W_p, \quad v_p(k) \in \V_p,
	\end{equation*}
	where $\W_p\subseteq\R^{n_p}$ and $\V_p\subseteq\R^{q_p}$ are $\mathcal{C}$ sets. The pair $(A_p,B_p)$ is assumed to be stabilizable and $(A_p,C_p)$ detectable. To measure performance, a quadratic cost is associated with the plant
	\begin{equation}
	x_p^\top Q x_p + u_p^\top R u_p, \label{eq:cost_plant}
	\end{equation}
	where $Q\succ 0\in\R^{n_p\times n_p}$ and $R\succ 0\in\R^{m_p\times m_p}$.
	
	\subsection{Smart sensor}
	
	In each time step, the smart sensor measures the noisy output of the plant $y_p(k)$. It also runs the rollout ETC (RETC) which, based on the taken measurement, calculates the control update $u_c(k)$ and determines whether this update should be transmitted via the network to the actuator. This binary transmission decision takes the values
	\begin{equation*}
	\gamma(k) = \begin{cases}
	1 & \text{if a transmission is triggered at time }k \\
	0 & \text{if not}
	\end{cases}.
	\end{equation*}
	The control update and transmission decision are computed according to the law
	\begin{equation} \label{eq:impl_trig_contr_law}
	\begin{bmatrix}	u_c(k)^\top & \gamma(k)	\end{bmatrix}^\top = \kappa_{\text{RETC}}(y_p(k),k),
	\end{equation}
	where $\kappa_{\text{RETC}}:\R^p\times\N_0\to\R^m\times\{0,1\}$ is \emph{implicitly} defined by the solution to an OCP\footnote{We note that the triggering and control laws in \cite{Antunes14,Gommans17,Wildhagen19_2}, in the absence of disturbances, could be equivalently reformulated in the form \eqref{eq:impl_trig_contr_law} by using a cyclically time-varying prediction horizon as defined later in \eqref{eq:shrinking_horizon}.}. The implicit triggering and control law will be detailed in Section \ref{sec:MPC_scheme}.
	
	\subsection{Network and token bucket TS}
	
	As depicted in Figure \ref{fig:network_tb}, an update of the control input $u_c$ cannot be applied to the plant directly and must be transmitted to the actuator via a communication network. Throughout this paper, we assume that transmissions over the network must fulfill the token bucket TS \cite{Tanenbaum11}, which is based on an analogy to a bucket containing a variable amount of tokens. New tokens are added to the bucket at a constant rate of $g\in\N_{\ge 1}$, while triggering a transmission comprises a certain cost $c\in\N_{\ge g}$. The bucket has a maximum capacity $b\in\N_{\ge c}$ and arriving tokens are discarded if the bucket is already full. Hence, the bucket level $\beta$ is governed by the difference equation
	\begin{equation*}
	\beta(k+1)=\min\{\beta(k)+g-\gamma(k) c,b\}, \;\beta(0)=\beta_0.
	\end{equation*}
	The rollout ETC is only allowed to trigger a transmission (i.e.,  $\gamma(k)=1$) if the current amount of tokens is sufficient to support the transmission cost $c$, which can be described by the constraint $\beta(k)\ge 0$ for all $k\in\N_0$. Hence, the average transmission rate allowed by the token bucket TS is $\frac{g}{c}$. In addition, it can be guaranteed a priori that transmissions are possible at a base period of $\lceil\frac{c}{g}\rceil$ time instants. If the parameters $g$ and $c$ are chosen such that $\frac{g}{c}$ is at most as high as the bandwidth of the bottleneck link in the network, congestion is avoided, which in turn results in low packet dropout probabilities and delays \cite{zhang2012network}. This justifies to assume in this article that both are negligible if the token bucket TS is fulfilled.
	
	\subsection{Actuator}
	
	The actuator stores the last received control update in an auxiliary state $u_s$ according to
	\begin{equation*}
	u_s(k+1)=\gamma(k)u_c(k) + (1-\gamma(k))u_s(k)
	\end{equation*}
	with $u_s(0)=u_{s,0}$. Borrowing from the typical approach in tube-based MPC, the control update $u_c$ is appended with a so-called error feedback term $u_e$ in order to contain the error between a nominal version of the plant and the real, uncertain plant \eqref{eq:plant} in a bounded set. In Section \ref{sec:bounding}, we will detail the error feedback and see that for some types of actuators, it can be updated regardless of the transmission decision. For now, we presume that the error feedback at time $k$ is given by $u_e(k)$. Consequently, the input applied to the plant is
	\begin{equation*}
	u_p(k)=\gamma(k)u_c(k) + (1-\gamma(k))u_s(k) + u_e(k).
	\end{equation*}
	
	\subsection{Overall NCS}
	
	We have seen in the previous subsections that the actuator stores the last received control update in an auxiliary state $u_s$ and that the bucket level $\beta$ is a dynamical state of the NCS as well. As a result, in addition to the plant state $x_p$, these variables represent a component of the NCS's overall state
	\begin{equation*}
	x \coloneqq \begin{bmatrix} x_p^\top & u_s^\top & \beta	\end{bmatrix}^\top.
	\end{equation*}
	Likewise, the overall input consists of the control update $u_c$, the transmission decision $\gamma$ and the error feedback $u_e$,
	\begin{equation*}
	u \coloneqq \begin{bmatrix} u_c^\top & \gamma & u_e^\top \end{bmatrix}^\top.
	\end{equation*}
	The overall NCS dynamics then become
	\begin{align} \label{eq:system}
	&x(k+1) = f(x(k),u(k)) + w(k), \\
	&f(x,u) \hspace{-1pt}\coloneqq\hspace{-1.5pt} \begin{bmatrix} A_px_p + B_p((1-\gamma) u_s + \gamma u_c + u_e)\\ (1-\gamma) u_s + \gamma u_c \\ \min\{\beta+g-\gamma c,b\} \end{bmatrix}\hspace{-3pt}, \:
	w \hspace{-1pt}\coloneqq\hspace{-1.5pt} \begin{bmatrix} w_p \\ 0 \\ 0 \end{bmatrix} \nonumber
	\end{align}
	and the overall output equation is
	\begin{equation*}
	y(k) = Cx(k) + v(k), \; v \coloneqq \begin{bmatrix} v_p^\top & 0 & 0 \end{bmatrix}^\top, \nonumber
	\end{equation*}
	where $C\coloneqq\text{diag}\{C_p,I,1\}$. We collect all constraints in the overall state and input constraints
	$x(k)\in\X\coloneqq \X_p \times \U_p \times \N_{[0,b]}$ and $u(k)\in\U\coloneqq\{(u_c,\gamma,u_e)\in\R^{m_p\times 1\times m_p}|u_c+u_e\in\U_p,\gamma\in\{0,1\}\}$.
	The overall input constraint takes this form since the sum of updated control $u_c$ and error feedback $u_e$ is applied to the plant, such that their sum must be contained in $\U_p$. Further, we denote the dimension of the overall state, input and output as $n\coloneqq n_p+m_p+1$, $m\coloneqq 2m_p+1$ and $q\coloneqq q_p+m_p+1$, respectively, and we define the uncertainty sets $w(k)\in\W\coloneqq \W_p\times\{0\}\times\{0\}$ and $v(k)\in\V\coloneqq \V_p\times\{0\}\times\{0\}$.

	\section{Bounding the influence of the uncertainties} \label{sec:bounding}
	
	In this section, we bound the influence of the uncertainties in order to account for them in a so-called tube-based MPC scheme: There, the idea is to use a nominal version of the plant, neglecting the uncertainties, in the prediction and to keep the error between the real and nominal state in a tube through an error feedback. This allows to tighten the constraints used in the prediction by the tube size such that satisfaction of the original constraints is ensured. Here, we use a Luenberger observer to estimate the plant state and show that the estimation error is bounded by an invariant set. Then, we will bound the influence of the disturbance on the control error between the estimated state and the state of the nominal system used for predictions. To this end, we will introduce a novel definition of an error feedback and a $[1,H]$ RCI set tailored to NCSs, which can handle inter-transmission intervals of several time steps in length. We will present three distinct strategies to implement the error feedback in an NCS setup, depending on the capabilities of the actuator. Lastly, the invariant sets for estimation and control error will be combined to form the tube bounding the error between the real and nominal system.
	
	\subsection{Observer and bound for estimation error}
	
	We run a Luenberger observer at the smart sensor to estimate the state of the plant. The estimated plant state $\hat{x}_p(k)\in\R^{n_p}$ follows the dynamics
	\begin{align}
	\hat{x}_p(k+1) = A_p \hat{x}_p(k) + B_p u_p(k) + L_p(y_p(k)-C_p \hat{x}_p(k)), \label{eq:luen_dyn}
	\end{align}
	where $L_p$ is such that $A_p-L_pC_p$ is Schur. Plugging in $u_p(k)$, to which the observer is assumed to have access, we rewrite the observer dynamics \eqref{eq:luen_dyn} as
	\begin{align} \label{eq:luen_alt}
	\hat{x}_p(k+&1) = A_p \hat{x}_p(k) + B_p (1-\gamma(k))\hat{u}_s(k)+B_p\gamma(k)u_c(k) \nonumber \\ &+B_p u_e(k) + L_pC_p(x_p(k)-\hat{x}_p(k))+L_pv_p(k).
	\end{align}
	
	For the overall state $x$, we define an observer system which consists of the Luenberger observer \eqref{eq:luen_alt} for $x_p$, and simulators for $u_s$ and $\beta$ (with their simulated versions denoted $\hat{u}_s$ and $\hat{\beta}$). The latter two are internal variables of the controller, which is why they are known with certainty. The reason for defining this overall observer system nonetheless is that the real overall system \eqref{eq:system} contains these states as well. It reads
	\begin{equation} \label{eq:system_obs}
	\hat{x}(k+1) = f(\hat{x}(k),u(k)) + B_\epsilon \epsilon(k) + B_v v(k),
	\end{equation}
	where $\hat{x}(k)\coloneqq\begin{bmatrix} \hat{x}_p(k)^\top & \hat{u}_s(k)^\top & \hat{\beta}(k) \end{bmatrix}^\top\in\R^n$ is the observer state, $\epsilon(k)\coloneqq x(k)-\hat{x}(k)$ is the \emph{estimation error}, $B_\epsilon\coloneqq\text{diag}\{L_pC_p,0,0\}$ and $B_v\coloneqq\text{diag}\{L_p,0,0\}$. 

	Our goal now is to derive a bound on the estimation error $\epsilon$, to be able to account for it later in a tube-based scheme. Such a bound can be derived by means of a robust positively invariant (RPI) set, for which we recall the following definition.
	\begin{definition}[cf. \cite{kolmanovsky1998theory}]
		A set $\Psi\subseteq\R^{n}$ is RPI for the estimation error $\epsilon$ if and only if for all $x(k),\hat{x}(k)\in\X$ such that $\epsilon(k)\in\Psi$, for any $w(k)\in\W$, $v(k)\in\V$ it holds that $\epsilon(k+1)\in\Psi$.
	\end{definition}
	
	\begin{lemma} \label{lem:RPI_obs}
		Suppose there exist $L_p\in\R^{n_p\times q_p}$ and a set $\Psi_p\in\R^{n_p}$, containing the origin, which satisfy
		\begin{equation} \label{eq:cond_RPI_obs}
		(A_p-L_pC_p)\Psi_p\oplus\W_p\oplus(-L_p\V_p)\subseteq\Psi_p.
		\end{equation}
		Then $\Psi\coloneqq\Psi_p\times\{0\}\times\{0\}$ is RPI for the estimation error $\epsilon$.
	\end{lemma}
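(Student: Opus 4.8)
The plan is to subtract the observer dynamics \eqref{eq:system_obs} from the true NCS dynamics \eqref{eq:system} to obtain an explicit recursion for the estimation error $\epsilon = x - \hat{x}$, and then to verify the invariance condition block by block, exploiting the product structure $\Psi = \Psi_p \times \{0\} \times \{0\}$. Assuming $\epsilon(k) \in \Psi$, subtracting the two dynamics (both driven by the same overall input $u(k)$) yields $\epsilon(k+1) = f(x(k),u(k)) - f(\hat{x}(k),u(k)) + w(k) - B_\epsilon \epsilon(k) - B_v v(k)$. I would then partition this identity into the three blocks corresponding to $x_p$, $u_s$ and $\beta$ and treat each in turn.

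First I would dispose of the $u_s$- and $\beta$-blocks. Since $\epsilon(k) \in \Psi$ forces $u_s(k) = \hat{u}_s(k)$ and $\beta(k) = \hat{\beta}(k)$, the $u_s$-difference $(1-\gamma(k))(u_s(k) - \hat{u}_s(k))$ vanishes, and the $\beta$-difference $\min\{\beta(k) + g - \gamma(k)c, b\} - \min\{\hat{\beta}(k) + g - \gamma(k)c, b\}$ vanishes as well. Because $w$, $B_\epsilon \epsilon$ and $B_v v$ all have zero second and third blocks (by the block-diagonal structure of $B_\epsilon$, $B_v$ and the definition of $w$), the corresponding components of $\epsilon(k+1)$ are again zero, so they stay in $\{0\} \times \{0\}$.

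It then remains to treat the $x_p$-block, whose error component I denote $\epsilon_p$; it reads $\epsilon_p(k+1) = A_p \epsilon_p(k) + B_p(1-\gamma(k))(u_s(k) - \hat{u}_s(k)) + w_p(k) - L_p C_p \epsilon_p(k) - L_p v_p(k)$. Using again $u_s(k) = \hat{u}_s(k)$, the middle term drops out and this collapses to the \emph{linear} recursion $\epsilon_p(k+1) = (A_p - L_p C_p)\epsilon_p(k) + w_p(k) - L_p v_p(k)$. Since $\epsilon_p(k) \in \Psi_p$, $w_p(k) \in \W_p$ and $v_p(k) \in \V_p$, the right-hand side lies in $(A_p - L_p C_p)\Psi_p \oplus \W_p \oplus (-L_p \V_p)$, which is contained in $\Psi_p$ by hypothesis \eqref{eq:cond_RPI_obs}. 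Hence $\epsilon_p(k+1) \in \Psi_p$ and therefore $\epsilon(k+1) \in \Psi$, establishing invariance.

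I do not anticipate a genuine obstacle here: the only nonsmooth and switching terms in $f$ are the saturation $\min\{\cdot,b\}$ in the bucket dynamics and the $\gamma$-dependent hold in $u_s$, and both cancel identically because the product structure of $\Psi$ pins the $u_s$- and $\beta$-components of the error to zero. The slightly delicate point is to argue this cancellation rigorously rather than by linearizing the $\min$ — but once it is observed, the claim reduces to the textbook RPI condition for the linear error map $A_p - L_p C_p$ driven by the bounded signals $w_p$ and $-L_p v_p$.
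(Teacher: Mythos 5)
Your proposal is correct and follows essentially the same route as the paper's proof: subtract the observer dynamics from the true dynamics, use the product structure of $\Psi$ to force $u_s(k)=\hat u_s(k)$ and $\beta(k)=\hat\beta(k)$ so the $u_s$- and $\beta$-blocks (including the $\min$ and the $\gamma$-dependent hold) cancel identically, and then reduce the $x_p$-block to the linear recursion $\epsilon_p(k+1)=(A_p-L_pC_p)\epsilon_p(k)+w_p(k)-L_pv_p(k)$, whose image lies in $\Psi_p$ by \eqref{eq:cond_RPI_obs}. Your added care in justifying the cancellation of the nonsmooth terms is a welcome elaboration but not a departure from the paper's argument.
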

	\begin{proof}
		Define $\epsilon_p(k)\coloneqq[ I \; 0 \; 0 ]\epsilon(k)$ and suppose $\epsilon(k)\in\Psi$. With the dynamics of system \eqref{eq:system} and observer \eqref{eq:system_obs}, we have $\epsilon(k+1) = x(k+1)-\hat{x}(k+1) = f(x(k),u(k)) \hspace{-0.5pt}+\hspace{-0.5pt} w(k) - f(\hat{x}(k),u(k)) \hspace{-0.5pt}-\hspace{-0.5pt} B_\epsilon \epsilon(k) \hspace{-0.5pt}-\hspace{-0.5pt} B_v v(k) \stackrel[\beta(k)=\hat{\beta}(k)]{u_s(k)=\hat{u}_s(k)}{=} \left[\substack{ (A_p-L_pC_p) \epsilon_p(k)+w_p(k)-L_pv_p(k) \\ 0 \\ 0 }\right] \stackrel{\epsilon_p(k)\in\Psi_p}{\in} (A_p-L_pC_p) \Psi_p\oplus\W_p\oplus(-L_p\V_p) \times \{0\} \times\{0\} \stackrel{\eqref{eq:cond_RPI_obs}}{\subseteq} \Psi_p \times \{0\} \times\{0\} = \Psi$.
	\end{proof}
	\begin{remark}
		Since $(A_p,C_p)$ is detectable, $L_p$ can be chosen such that $A_p-L_pC_p$ is Schur. This is a sufficient condition for existence of $\Psi_p$ that satisfies \eqref{eq:cond_RPI_obs} (cf. \cite{kolmanovsky1998theory}).
	\end{remark}

	Hence, under condition \eqref{eq:cond_RPI_obs}, $\Psi$ is an RPI set for the estimation error $\epsilon$. It immediately follows that if $\epsilon(0)\in\Psi$, then $\epsilon(k)\in\Psi$ for all $k\in\N_0$. As a result, the ``observer disturbance'' $\delta(k)\coloneqq B_\epsilon \epsilon(k) + B_v v(k)$ satisfies $\delta(k)\in \Delta \coloneqq B_\epsilon \Psi + B_v \V=L_p(C_p\Psi_p\oplus\V_p)\times\{0\}\times\{0\}$ for all $k\in\N_0$.

	Next, we introduce the nominal system neglecting the disturbances
	\begin{equation} \label{eq:system_nom}
	\bar{x}(k+1) = f(\bar{x}(k),\bar{u}(k))
	\end{equation}
	with the nominal state $\bar{x}(k)\in\R^{n}$ and nominal input $\bar{u}(k)\in\R^{m}$, which will be used in the OCP for predictions. The fact that the estimation error does not leave $\Psi$ allows to bound its influence in the prediction. The idea is now, as in \cite{mayne2006robust_output}, to bound also the \emph{control error} between observer and nominal system $e(k)\coloneqq\hat{x}(k)-\bar{x}(k)$. This is done by appending the nominal input $\bar{u}(k)$, given by the solution to the OCP, by an error feedback which is then applied to the observer and the real system. As the observer state differs from the nominal state by $e$, and the real state differs from the observer state by $\epsilon$, it holds that $x(k)=\bar{x}(k)+e(k)+\epsilon(k)$. Therefore, bounding $e$, as done next, will enable us to tighten the constraint sets for the nominal state and input in the predictions in order to ensure robust constraint satisfaction of the real state and input.
	
	\subsection{Error feedback and bound for control error}
	
	Recall that in our NCS setup, updated inputs need to be sent to the actuator via a network, and that the transmission traffic is constrained by the token bucket TS. As a result, it is in particular not possible to transmit and apply an updated version of the error feedback in each time step, as is done in classical tube-based MPC (cf. \cite{mayne2006robust_output,Bayer14,Dong18_2}).
	
	In the following, we will present three different strategies to achieve an adequate error feedback in an NCS setup nonetheless, the applicability of which depends heavily on the capabilities of the actuator: For the first, called the \textit{ZOH actuator}, we assume that it has no processor or clock and can only hold the last received control update $u_c$. As a result, $u_e$ is chosen to $0$ and some kind of error feedback is already included in the control updates $u_c$. This is the simplest conceivable type of actuator, which requires no additional assumptions compared to disturbance-free rollout ETC (cf. \cite{Antunes14,Gommans17,Wildhagen19_2}). For the second, called the \textit{prediction-based actuator}, we assume that it is able to perform simple arithmetic operations in addition to the ZOH. Then, the error feedback can be implemented by predicting the observer state locally at the actuator in an auxiliary state variable $\tilde{x}$, and the error feedback can be chosen as\footnote{We implicitly assume here that a transmitted packet contains not only a control update $u_c$, but additionally the current nominal plant state $\bar{x}$ and the current observer state $\hat{x}$. In a packet-based network, the message part of a packet is typically much longer than what is required to transmit a single numerical value. It is thus reasonable to assume that in addition to the control update, the nominal and observer state can be transmitted in the same packet.} $u_e=K(\tilde{x}-\bar{x})$. The auxiliary system is different from the nominal one since the error feedback is appended to the nominal input in case of the former. In the third scenario, called the \textit{local measurement actuator}, we assume in addition that it has direct access to output measurements of the plant\footnote{\label{footnote:MEAS}One might raise the concern that the local measurement actuator could just run an MPC scheme like \cite{mayne2006robust_output} to steer the plant towards the origin under satisfaction of the constraints, since it has access to measurements and is collocated with the plant. However, we assume here that only the smart sensor possesses sufficient computational power to solve an optimization problem in real time. We note that this is computationally much more demanding than to perform some simple arithmetics. Further, it is not an alternative that the actuator implements a simple linear state feedback to drive the plant towards the origin, since in this case, constraint satisfaction could not be guaranteed.}. Then, it can run a copy of the Luenberger observer locally and the error feedback can simply be chosen as $u_e=K(\hat{x}-\bar{x})$.

	In what follows, we will bound the control error $e$ in an RCI set. The exact form of this RCI set will depend on the chosen actuator and error feedback. First, we will introduce a suitable mathematical definition of an error feedback and RCI set for NCSs. Then, we will describe the three different types of actuators and error feedbacks using this mathematical framework, and derive conditions for when a set is control invariant for each of these options.
	
	\subsubsection{Definition of $[1,H]$ control invariant set}
	
	For each of the three strategies, we need to take into account that updated information can not be sent to the actuator in each time step $k$ due to the constraints of the token bucket TS. For this reason, we need to derive a bound on the control error $e$ which is valid although the actuator did not receive any update of the error feedback or the nominal and observer state for several consecutive time steps. We say a set is $[1,H]$ robust control invariant (RCI), $H\in\N$, if it has the following property: If at $i=0$, the control error $e$ is contained in the set and a transmission takes place, and then the actuator receives no updated information for up to $H$ consecutive time steps, it does not leave the set in any of these time $H$ steps. Hence, knowledge of a $[1,H]$ RCI set enables us to bound the influence of the disturbance on the control error $e$.
	
	Next, we introduce a formal definition of a $[1,H]$ RCI set. To this end, consider an auxiliary system
	\begin{equation} \label{eq:system_aux}
	\tilde{x}(k+1) = f(\tilde{x}(k),u(k))
	\end{equation}
	to predict the observer state, with the auxiliary state $\tilde{x}(k)\in\R^{n}$. It receives the same input as the observer and real system, but neglects the disturbance. Further, consider that the control
	\begin{equation}
	\bar{u}(0)=\bar{\nu}(0)\coloneqq\begin{bmatrix} \bar{\nu}_c(0) \\ 1 \\ 0\end{bmatrix}\hspace{-2pt}
	, \:
	\bar{u}(i)=\bar{\nu}(i)\coloneqq\begin{bmatrix} 0 \\ 0 \\ 0 \end{bmatrix}\hspace{-2pt}, \: \forall i\in\N_{[1,H-1]} \label{eq:multi_step_control}
	\end{equation}
	is applied to the nominal system \eqref{eq:system_nom} with $\bar{\nu}_c(0)\in\U_p$. Then, the error feedback applied to the real \eqref{eq:system}, observer \eqref{eq:system_obs} and auxiliary system \eqref{eq:system_aux} is
	\begin{align}
	u(0)\hspace{-2pt}&=\hspace{-2pt}\phi'(\bar{\nu}(0),\hat{x}(0),\tilde{x}(0),\bar{x}(0))\hspace{-2pt}\coloneqq\hspace{-2pt}\begin{bmatrix} \mu(\bar{\nu}_c(0),\hat{x}_p(0),\bar{x}_p(0)) \\ 1 \\ \eta(\hat{x}_p(0),\tilde{x}_p(0),\bar{x}_p(0))  \end{bmatrix}\hspace{-2pt}, \nonumber \\
	u(i)\hspace{-2pt}&=\hspace{-2pt}\phi''(\bar{\nu}(i),\hat{x}(i),\tilde{x}(i),\bar{x}(i))\hspace{-2pt}\coloneqq\hspace{-2pt}\begin{bmatrix} 0\\ 0 \\ \eta(\hat{x}_p(i),\tilde{x}_p(i),\bar{x}_p(i))  \end{bmatrix}\hspace{-2pt},
	\label{eq:multi_step_error_feedback}
	\end{align}
	for all $i\in\N_{[1,H-1]}$, where $\mu:\U_p\times\X_p\times\X_p\to\U_p$ and $\eta:\X_p\times\X_p\times\X_p\to\U_p$. Note that the term $\mu$ may only depend on the nominal control update and on the observer and nominal state. As $\gamma(0)=\bar{\gamma}(0)=1$ and $\gamma(i)=\bar{\gamma}(i)=0$, $i\in\N_{[1,H-1]}$, $\mu$ is sent to the actuator at $i=0$ and then zero-order held according to the system dynamics \eqref{eq:system}. In contrast, the term $\eta$, which may depend on the observer, auxiliary and nominal state, can take different values in each time step. The reasoning behind this is the following: the term $\eta$ must not be used as a steering input, which must be computed by the rollout ETC and sent via the network, but it represents an error feedback which is locally computed at the actuator.
	
	Let us consider next the state sequences of the nominal system under \eqref{eq:multi_step_control} and of the observer and auxiliary system under \eqref{eq:multi_step_error_feedback}, i.e., that $\bar{u}(i)=\bar{\nu}(i)$, $u(0)=\phi'(\bar{\nu}(0),\hat{x}(0),\tilde{x}(0),\bar{x}(0))$ and $u(i)=\phi''(\bar{\nu}(i),\hat{x}(i),\tilde{x}(i),\bar{x}(i))$ for all $i\in\N_{[1,H-1]}$:
	\begin{align}
	\bar{x}(1) &= f(\bar{x}(0),\bar{\nu}(0)) \nonumber \\
	\tilde{x}(1) &= f(\tilde{x}(0),\phi'(\bar{\nu}(0),\hat{x}(0),\tilde{x}(0),\bar{x}(0))) \label{traj_1} \\
	\hat{x}(1) &= f(\hat{x}(0),\phi'(\bar{\nu}(0),\hat{x}(0),\tilde{x}(0),\bar{x}(0)))+ \delta(0) \nonumber
	\end{align}
	and for all $i\in\N_{[1,H-1]}$
	\begin{align}
	\bar{x}(i+1) &= f(\bar{x}(i),\bar{\nu}(i)) \nonumber \\
	\tilde{x}(i+1) &= f(\tilde{x}(i),\phi''(\bar{\nu}(i),\hat{x}(i),\tilde{x}(i),\bar{x}(i))) \label{traj_2} \\
	\hat{x}(i+1) &= f(\hat{x}(i),\phi''(\bar{\nu}(i),\hat{x}(i),\tilde{x}(i),\bar{x}(i)))+ \delta(i). \nonumber
	\end{align}
	The control error is then defined for all $i\in\N_{[0,H]}$ by
	\begin{equation*}
	e(i)\coloneqq \hat{x}(i)-\bar{x}(i),
	\end{equation*}
	where $\bar{x}(i)$, $\tilde{x}(i)$ and $\hat{x}(i)$ evolve according to \eqref{traj_1} and \eqref{traj_2}.
	
	\begin{definition} \label{def:RCI_set}
		For a given $H\in\N$, a set $\Omega\subseteq\R^{n}$ is $[1,H]$ RCI for the control error $e$ if and only if there exist $\mu:\U_p\times\X_p\times\X_p\to\U_p$ and $\eta:\X_p\times\X_p\times\X_p\to\U_p$ such that for all $\hat{x}(0),\tilde{x}(0),\bar{x}(0)\in\X$, $\tilde{x}(0)=\hat{x}(0)$ with $e(0)\in\Omega$, all $\bar{\nu}_c(0)\in\U_p$ and all $\{\delta(i)\}_{i=0}^{H-1}\in\Delta^H$ it holds that
		\begin{equation*}
		e(i) \in \Omega, \; \forall i\in\N_{[1,H]}.
		\end{equation*}
	\end{definition}

	According to Definition \ref{def:RCI_set}, a set is $[1,H]$ RCI if the control error $e$ does not leave the set for up to $H$ consecutive time steps under the error feedback $\phi',\phi''$. Next, we will present three strategies to choose the involved terms $\mu$ and $\eta$, depending on the capabilities of the actuator.
	
	\subsubsection{ZOH actuator} For the first strategy, we assume that the actuator has no computational abilities, clock or memory and can only hold the last received control update in a ZOH fashion. In this case, we propose to choose $\mu$ as the nominal control update appended by a linear feedback on the current control error. Further, $\eta$ is set to zero, since the actuator cannot perform any computations. In the following result, we formalize the proposed error feedback and derive a sufficient condition for when a set is $[1,H]$ RCI. The proof can be found in Appendix \ref{app:lem_RPI_set_ZOH}. We define $B_p^i\coloneqq\sum_{j=0}^{i-1}(A_p)^i B_p$.
	
	\begin{lemma} \label{lem:RPI_set_ZOH}
		Suppose there exist $H\in\N$, $K_p\in\R^{m_p\times n_p}$ and a set $\Omega_p\subseteq\R^{n_p}$, containing the origin, which satisfy
		\begin{equation}
		(A_p^i+B_p^i K_p)\Omega_p \oplus \bigg(\bigoplus_{j=0}^{i-1} A_p^i L_p (C_p\Psi_p\oplus\V_p)\bigg) \subseteq \Omega_p \label{eq:inv_ZOH}
		\end{equation}
		for all $i\in\N_{[1,H]}$. Then, with $\mu(\bar{\nu}_c,\hat{x}_p,\bar{x}_p)\coloneqq\bar{\nu}_c + K_p(\hat{x}_p-\bar{x}_p)$ and $\eta(\hat{x}_p,\tilde{x}_p,\bar{x}_p)\coloneqq0$, $\Omega\coloneqq\Omega_p\times K_p\Omega_p\times\{0\}$ is $[1,H]$ RCI for the control error $e$.
	\end{lemma}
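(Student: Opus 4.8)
The plan is to verify Definition \ref{def:RCI_set} directly for the proposed choices $\mu(\bar{\nu}_c,\hat{x}_p,\bar{x}_p)=\bar{\nu}_c+K_p(\hat{x}_p-\bar{x}_p)$ and $\eta\equiv 0$, by tracking the three blocks of the control error $e(i)=\hat{x}(i)-\bar{x}(i)=[e_p(i)^\top,\,e_{u_s}(i)^\top,\,e_\beta(i)]^\top$ separately. Since $\eta\equiv 0$ for the ZOH actuator, the auxiliary state $\tilde{x}$ never enters the feedback \eqref{eq:multi_step_error_feedback} and can be disregarded; only the nominal trajectory \eqref{eq:system_nom} and the observer trajectory \eqref{eq:system_obs} matter. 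I would first dispose of the two easy blocks and then concentrate on the plant-state block $e_p$.

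For the bucket block, both systems use the identical transmission pattern prescribed by \eqref{eq:multi_step_control} ($\gamma(0)=1$, $\gamma(i)=0$ for $i\in\N_{[1,H-1]}$), the disturbance $\delta$ has no bucket component, and $e_\beta(0)=0$ by $e(0)\in\Omega$. Hence the (identical, albeit nonlinear) recursions $\min\{\beta+g-\gamma c,b\}$ yield $\hat{\beta}(i)=\bar{\beta}(i)$, i.e. $e_\beta(i)=0$, throughout. For the stored-input block, the transmission at $i=0$ writes $u_c(0)=\bar{\nu}_c(0)+K_p e_p(0)$ into $\hat{u}_s$ and $\bar{\nu}_c(0)$ into $\bar{u}_s$, and since no further transmission occurs, the held values give $e_{u_s}(i)=K_p e_p(0)$ for every $i\in\N_{[1,H]}$; as $e_p(0)\in\Omega_p$, this lies in $K_p\Omega_p$ as required.

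The core of the argument is the plant-state block, and the key observation is that the \emph{effective} plant input error is constant along the whole horizon: at $i=0$ the feedback $\mu$ contributes $B_p K_p e_p(0)$, while for $i\in\N_{[1,H-1]}$ the zero-order-held control differs between observer and nominal by exactly the same frozen quantity $K_p e_p(0)$. Together with the observer disturbance $\delta_p(i)=L_pC_p\epsilon_p(i)+L_pv_p(i)$, this yields the linear recursion $e_p(i+1)=A_p e_p(i)+B_p K_p e_p(0)+\delta_p(i)$ for all $i\in\N_{[0,H-1]}$, which I would unroll by induction to the closed form $e_p(i)=(A_p^i+B_p^i K_p)e_p(0)+\sum_{j=0}^{i-1}A_p^{i-1-j}\delta_p(j)$, with $B_p^i=\sum_{j=0}^{i-1}A_p^j B_p$.

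Finally I would invoke the disturbance bound. By the hypothesis $\{\delta(i)\}_{i=0}^{H-1}\in\Delta^H$ of Definition \ref{def:RCI_set} and the form $\Delta=L_p(C_p\Psi_p\oplus\V_p)\times\{0\}\times\{0\}$ established after Lemma \ref{lem:RPI_obs}, each $\delta_p(i)\in L_p(C_p\Psi_p\oplus\V_p)$. Together with $e_p(0)\in\Omega_p$, the closed form places $e_p(i)$ in the left-hand-side Minkowski sum of \eqref{eq:inv_ZOH}, so $e_p(i)\in\Omega_p$ for all $i\in\N_{[1,H]}$ by assumption. Combining the three blocks gives $e(i)\in\Omega_p\times K_p\Omega_p\times\{0\}=\Omega$ for all $i\in\N_{[1,H]}$, which is exactly Definition \ref{def:RCI_set}. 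The main obstacle I anticipate is the bookkeeping establishing the constant effective-input-error claim uniformly across the transmission step $i=0$ and the subsequent hold steps (this is where the ZOH structure is genuinely used), and then matching the unrolled disturbance convolution $\sum_{j=0}^{i-1}A_p^{i-1-j}\delta_p(j)$, whose terms range over the same powers of $A_p$, to the Minkowski sum appearing in \eqref{eq:inv_ZOH}.
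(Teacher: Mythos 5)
Your proposal is correct and follows essentially the same route as the paper's proof: the paper packages your three blocks into the matrices $\mathcal{A}_1$ (transmission step) and $\mathcal{A}_0$ (hold steps) and runs an induction that yields exactly your closed form $e_p(i)=(A_p^i+B_p^iK_p)e_p(0)+\sum_{j=0}^{i-1}A_p^{i-1-j}\delta_p(j)$ together with $e_{u_s}(i)=K_pe_p(0)$ and $e_\beta(i)=0$, before invoking \eqref{eq:inv_ZOH}. Your reading of the Minkowski sum as ranging over the powers $A_p^0,\dots,A_p^{i-1}$ is the one the paper's own proof uses (the exponent $A_p^i$ in the displayed condition \eqref{eq:inv_ZOH} is evidently a typo for $A_p^j$), so no gap remains.
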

	
	The particular form of Condition \eqref{eq:inv_ZOH} allows us, given $K_p$ and $H$, to use methods from the literature on RPI sets for linear difference inclusions to check existence of a suitable $\Omega_p$ \cite[Theorem 1]{Kouramas05}, and to construct $\Omega_p$ in practice \cite[Algorithm 1]{Kouramas05}, \cite[Algorithm 2]{Pluymers05}. A condition similar to \eqref{eq:inv_ZOH} appeared already in the context of tube-based self-triggered MPC \cite{Zhan17}, where knowledge of a suitable $K_p$ was assumed, but no systematic design procedure was given. In the next lemma, we address this issue by stating conditions that permit an efficient search for $K_p$. The proof can be found in Appendix \ref{app:RPI_set_ZOH_find_K}.
	\begin{lemma} \label{lem:RPI_set_ZOH_find_K}
		Suppose $\Psi_p$ from Lemma \ref{lem:RPI_obs} is a $\mathcal{C}$ set and there exist a scalar $\lambda\in(0,1)$ and matrices $X=X^\top\succeq 0\in\R^{n_p\times n_p}$ and $Y\in\R^{m_p\times n_p}$ such that
		\begin{equation}
		\begin{bmatrix}
		X & A_p^i X + B_p^i Y \\
		X A_p^{i\top} + Y^\top B_p^{i\top} & \lambda X
		\end{bmatrix} \succeq 0 \label{LMI_feedback_gain}
		\end{equation} 
		is satisfied for all $i\in\N_{[1,H]}$. Then, if $K_p\coloneqq YX^{-1}$, there exists an $\Omega_p$ such that Condition $\eqref{eq:inv_ZOH}$ is satisfied.
	\end{lemma}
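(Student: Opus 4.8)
The plan is to translate the linear matrix inequality \eqref{LMI_feedback_gain} into a joint contraction property of the multi-step error maps $M_i\coloneqq A_p^i+B_p^iK_p$, $i\in\N_{[1,H]}$, measured in the quadratic norm induced by $P\coloneqq X^{-1}$, and then to exhibit a suitable $\Omega_p$ explicitly as a sublevel set (ellipsoid) of this norm whose radius is chosen so large that the accumulated disturbance can never push the error out of it. First I would observe that, since $K_p=YX^{-1}$ is required to be well-defined, $X$ is invertible; together with the $(1,1)$-block $X\succeq0$ of \eqref{LMI_feedback_gain} this yields $X\succ0$, so $P=X^{-1}$ exists and $\|x\|_P^2\coloneqq x^\top P x$ is a genuine norm.

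Substituting $Y=K_pX$ into \eqref{LMI_feedback_gain}, the off-diagonal block becomes $A_p^iX+B_p^iK_pX=M_iX$, so the inequality reads $\left[\begin{smallmatrix} X & M_iX \\ XM_i^\top & \lambda X \end{smallmatrix}\right]\succeq0$ for every $i\in\N_{[1,H]}$. Taking the Schur complement with respect to the positive definite $(1,1)$-block $X$ gives $\lambda X-XM_i^\top X^{-1}M_iX\succeq0$, which upon pre- and post-multiplication by $X^{-1}$ is equivalent to $M_i^\top P M_i\preceq\lambda P$. Hence $\|M_ix\|_P^2\le\lambda\|x\|_P^2$ for all $x$, i.e.\ each of the $H$ maps $M_i$ is a $\sqrt{\lambda}$-contraction in the common norm $\|\cdot\|_P$, with $\sqrt{\lambda}<1$.

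It remains to construct a single $\Omega_p$ satisfying all $H$ inclusions \eqref{eq:inv_ZOH} simultaneously. Writing $D_i\coloneqq\bigoplus_{j=0}^{i-1}A_p^jL_p(C_p\Psi_p\oplus\V_p)$ for the disturbance term in \eqref{eq:inv_ZOH}, the hypothesis that $\Psi_p$ is a $\mathcal{C}$ set, together with $\V_p$ being a $\mathcal{C}$ set, ensures that $C_p\Psi_p\oplus\V_p$ is compact and contains the origin; as a finite Minkowski sum of linear images of a compact set, each $D_i$ is therefore compact and contains the origin. Consequently $r\coloneqq\max_{i\in\N_{[1,H]}}\max_{d\in D_i}\|d\|_P$ is finite, and I would set $\Omega_p\coloneqq\{x\in\R^{n_p}:\|x\|_P\le r/(1-\sqrt{\lambda})\}$, which is an ellipsoid, hence a $\mathcal{C}$ set containing the origin. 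For any $x\in\Omega_p$, any $i\in\N_{[1,H]}$ and any $d\in D_i$, the triangle inequality together with the contraction bound gives $\|M_ix+d\|_P\le\sqrt{\lambda}\,\|x\|_P+\|d\|_P\le\sqrt{\lambda}\,\tfrac{r}{1-\sqrt{\lambda}}+r=\tfrac{r}{1-\sqrt{\lambda}}$, so $M_ix+d\in\Omega_p$; since this holds elementwise, $M_i\Omega_p\oplus D_i\subseteq\Omega_p$, which is exactly \eqref{eq:inv_ZOH}.

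The main obstacle I anticipate is getting the Schur-complement direction right so that the contraction is expressed for $M_i$ (and not $M_i^\top$) in the $P$-norm, and keeping track that \eqref{eq:inv_ZOH} is a family of $H$ \emph{separate} inclusions that must share one invariant set. Once the joint contraction $M_i^\top PM_i\preceq\lambda P$ is in hand, the choice of $\Omega_p$ as a $P$-norm ball and the verification of invariance are routine; the only quantitative input is the finiteness of $r$, which rests precisely on the compactness guaranteed by the $\mathcal{C}$-set hypothesis on $\Psi_p$.
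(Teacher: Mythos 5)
Your proof is correct, and it diverges from the paper's in an instructive way. The first half is identical in substance: both arguments substitute $Y=K_pX$, take the Schur complement of \eqref{LMI_feedback_gain} with respect to the (necessarily positive definite, since $X^{-1}$ must exist) block $X$, and congruence-transform by $X^{-1}=P$ to obtain $(A_p^i+B_p^iK_p)^\top P(A_p^i+B_p^iK_p)\preceq\lambda P$ for all $i\in\N_{[1,H]}$, i.e.\ a common $\sqrt{\lambda}$-contraction in the $P$-norm. Where you part ways is in concluding existence of $\Omega_p$: the paper treats $\{A_p^i+B_p^iK_p\}_{i\in\N_{[1,H]}}$ as the vertex set of a linear difference inclusion, verifies the compactness and contraction hypotheses of \cite[Theorem 1]{Kouramas05} (enlarging all disturbance sets to the common superset $\bigoplus_{j=0}^{H-1}A_p^jL_p(C_p\Psi_p\oplus\V_p)$ and then using monotonicity of the inclusions), and invokes that theorem to obtain a suitable $\Omega_p$ non-constructively. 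You instead build $\Omega_p$ by hand as the $P$-norm ball of radius $r/(1-\sqrt{\lambda})$ with $r$ the largest $P$-norm attained over the (compact, by the $\mathcal{C}$-set hypothesis on $\Psi_p$) disturbance sets, and verify the $H$ inclusions directly by the triangle inequality. Your route is self-contained and yields an explicit ellipsoidal candidate, at the price of conservatism (the paper's cited machinery is geared toward outer approximations of the \emph{minimal} RPI set, which is what one would actually compute in practice, cf.\ the algorithms referenced after Lemma \ref{lem:RPI_set_ZOH}); for the pure existence claim of the lemma, both are equally valid.
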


	\begin{remark}
		For fixed $\lambda$, a simultaneous search for $X$ and $Y$ can be performed via a semi-definite program.
	\end{remark}

	\subsubsection{Prediction-based actuator} For the second strategy, we presume that the actuator possesses sufficient computational power to perform simple arithmetic operations, such that it may predict the auxiliary and nominal state in order to apply the error feedback. This is possible since it is known beforehand that in between transmission instants, the nominal input remains unchanged. For this strategy, we propose to set $\mu$ to be the nominal input and $\eta$ to be a linear error feedback, penalizing the difference between the auxiliary and nominal state variable. The following result establishes a condition under which a set is $[1,H]$ RCI under such an error feedback. The proof can be found in Appendix \ref{app:RPI_set_PRED}.
	
	\begin{lemma} \label{lem:RPI_set_PRED}
		Suppose there exist $H\in\N$, $K_p\in\R^{m_p\times n_p}$ and a set $\Omega_p\subseteq\R^{n_p}$, containing the origin, which satisfy
		\begin{equation}
		(A_p+B_p K_p)^i\Omega_p \oplus \bigg(\bigoplus_{j=0}^{i-1} A_p^i L_p (C_p\Psi_p\oplus\V_p)\bigg) \subseteq \Omega_p \label{eq:inv_PRED}
		\end{equation}
		for all $i\in\N_{[1,H]}$. Then, with $\mu(\bar{\nu}_c,\hat{x}_p,\bar{x}_p)\coloneqq\bar{\nu}_c$ and $\eta(\hat{x}_p,\tilde{x}_p,\bar{x}_p)\coloneqq K_p(\tilde{x}_p-\bar{x}_p)$, $\Omega\coloneqq\Omega_p\times \{0\}\times\{0\}$ is $[1,H]$ RCI for the control error $e$.
	\end{lemma}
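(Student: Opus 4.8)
The plan is to verify Definition \ref{def:RCI_set} directly: fix $\hat{x}(0),\tilde{x}(0),\bar{x}(0)\in\X$ with $\tilde{x}(0)=\hat{x}(0)$ and $e(0)\in\Omega$, an arbitrary steering input $\bar{\nu}_c(0)\in\U_p$ and an arbitrary disturbance realization $\{\delta(i)\}_{i=0}^{H-1}\in\Delta^H$, and show $e(i)\in\Omega=\Omega_p\times\{0\}\times\{0\}$ for all $i\in\N_{[1,H]}$ under the proposed $\mu(\bar{\nu}_c,\hat{x}_p,\bar{x}_p)=\bar{\nu}_c$ and $\eta(\hat{x}_p,\tilde{x}_p,\bar{x}_p)=K_p(\tilde{x}_p-\bar{x}_p)$. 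Since $\Omega$ is a product set, I would split the control error $e(i)=[\,e_p(i)^\top\ e_{u_s}(i)^\top\ e_\beta(i)\,]^\top$ into its plant, buffer-input and token-bucket parts and treat each separately.

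First I would dispose of the two trivial components. Because $\gamma(0)=\bar{\gamma}(0)=1$ and $\gamma(i)=\bar{\gamma}(i)=0$ for $i\in\N_{[1,H-1]}$ are identical for the observer and nominal systems, and the token-bucket state carries no disturbance, the $\beta$-dynamics of \eqref{eq:system_obs} and \eqref{eq:system_nom} coincide given $\hat{\beta}(0)=\bar{\beta}(0)$; hence $e_\beta(i)=0$. Likewise, the transmission at $i=0$ overwrites the stored input in both systems with the same value $\bar{\nu}_c(0)$, so $\hat{u}_s(i)=\bar{u}_s(i)=\bar{\nu}_c(0)$ for all $i\in\N_{[1,H]}$ and $e_{u_s}(i)=0$. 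Everything thus reduces to showing $e_p(i)\in\Omega_p$.

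The core of the argument is to set up the coupled recursion between the true plant control error $e_p(i)=\hat{x}_p(i)-\bar{x}_p(i)$ and the auxiliary control error $\tilde{e}_p(i)\coloneqq\tilde{x}_p(i)-\bar{x}_p(i)$. Plugging $\phi',\phi''$ into \eqref{traj_1}--\eqref{traj_2} and using $\hat{u}_s(i)=\bar{u}_s(i)$, I would obtain, for all $i\in\N_{[0,H-1]}$, the open-loop-driven error dynamics $e_p(i+1)=A_pe_p(i)+B_pK_p\tilde{e}_p(i)+\delta_p(i)$ (with $\delta_p(i)\coloneqq[\,I\ 0\ 0\,]\delta(i)\in L_p(C_p\Psi_p\oplus\V_p)$) together with the disturbance-free auxiliary dynamics $\tilde{e}_p(i+1)=(A_p+B_pK_p)\tilde{e}_p(i)$. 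The decisive observation is that the hypothesis $\tilde{x}(0)=\hat{x}(0)$ gives $\tilde{e}_p(0)=e_p(0)$, so by induction $\tilde{e}_p(i)=(A_p+B_pK_p)^i e_p(0)$. Substituting this into the $e_p$-recursion and unrolling, I expect the closed form $e_p(i)=(A_p+B_pK_p)^i e_p(0)+\sum_{j=0}^{i-1}A_p^{\,i-1-j}\delta_p(j)$, which I would confirm by a short induction. Then, since $e_p(0)\in\Omega_p$ and each $\delta_p(j)\in L_p(C_p\Psi_p\oplus\V_p)$, reindexing the Minkowski sum and invoking the invariance hypothesis \eqref{eq:inv_PRED} yields $e_p(i)\in(A_p+B_pK_p)^i\Omega_p\oplus\bigoplus_{j=0}^{i-1}A_p^{\,j}L_p(C_p\Psi_p\oplus\V_p)\subseteq\Omega_p$ for every $i\in\N_{[1,H]}$; combined with $e_{u_s}(i)=0$ and $e_\beta(i)=0$ this gives $e(i)\in\Omega$.

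I expect the main obstacle to be handling the coupling between the two error systems correctly, and in particular seeing why the locally recomputed feedback $K_p\tilde{e}_p(i)$ contracts the deterministic part by the \emph{closed-loop} matrix power $(A_p+B_pK_p)^i$, while the disturbances $\delta_p(j)$ still propagate only through \emph{open-loop} powers of $A_p$. This is precisely what separates the result from the ZOH case of Lemma \ref{lem:RPI_set_ZOH}, where the feedback is frozen at $K_pe_p(0)$ by the zero-order hold and the deterministic contraction therefore appears as $A_p^i+B_p^iK_p$ instead. Keeping the two propagation mechanisms disentangled, and matching the resulting $A_p$-powers to the exponents in \eqref{eq:inv_PRED} via reindexing of the Minkowski sum, is the step that needs the most care.
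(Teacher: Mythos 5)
Your proposal is correct: the error recursions you write down are exactly what the dynamics \eqref{traj_1}--\eqref{traj_2} give under the stated $\mu,\eta$, the closed form $e_p(i)=(A_p+B_pK_p)^i e_p(0)+\sum_{j=0}^{i-1}A_p^{i-1-j}\delta_p(j)$ follows by the short induction you indicate (the step uses only $A_p(A_p+B_pK_p)^i+B_pK_p(A_p+B_pK_p)^i=(A_p+B_pK_p)^{i+1}$), and reindexing the disturbance sum matches the hypothesis \eqref{eq:inv_PRED} (whose $A_p^i$ under the $\bigoplus_{j}$ is evidently meant as $A_p^j$, as both your derivation and the paper's confirm). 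The paper reaches the same conclusion by a slightly different bookkeeping: instead of your coupled recursion for $e_p$ driven by $\tilde{e}_p$, it splits the control error additively as $e=e_1+e_2$ with $e_1=\hat{x}-\tilde{x}$ and $e_2=\tilde{x}-\bar{x}$. Since observer and auxiliary system receive the \emph{same} error feedback, the $B_pK_p\tilde{e}_p$ terms cancel in $e_1$, which therefore obeys the purely open-loop recursion $e_{1,p}(i+1)=A_pe_{1,p}(i)+\delta_p(i)$ with $e_1(0)=0$, while $e_2$ evolves disturbance-free under $(A_p+B_pK_p)$; the two propagation mechanisms you had to disentangle inside one recursion are thus decoupled from the outset, at the price of introducing the intermediate difference $\hat{x}-\tilde{x}$. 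Both routes are valid and of comparable length; yours trades that extra decomposition for one unrolling/induction on the combined recursion.
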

	
	A similar condition for a $[1,H]$ RCI set also appeared in the context of tube-based self-triggered MPC in \cite{Brunner14}. Since $(A_p,B_p)$ is stabilizable, $K_p$ may be chosen such that $A_p+B_pK_p$ is Schur. By \cite[Lemma 2]{Brunner14}, it is immediate that this is a sufficient condition for existence of $\Omega_p$ that satisfies \eqref{eq:inv_PRED}. As in the case of the ZOH actuator, a suitable set $\Omega_p$ can be computed via \cite{Kouramas05} or \cite[Algorithm 2]{Pluymers05}.
	
	\subsubsection{Local measurement actuator} In addition to some computational abilities for arithmetic operations, we assume in the third scenario that the actuator has access to output measurements in each time step. Then, it may predict the nominal state and furthermore, run a local copy of the Luenberger observer to compute the observer state. For this strategy, we propose to choose $\mu$ as the nominal input and to penalize the difference between the observer and nominal state in $\eta$. In the next result, we establish a sufficient condition for a $[1,H]$ RCI set using this error feedback. The proof can be found in Appendix \ref{app:RPI_set_MEAS}.
	
	\begin{lemma} \label{lem:RPI_set_MEAS}
		Suppose there exist $K_p\in\R^{m_p\times n_p}$ and a set $\Omega_p\in\R^{n_p}$, containing the origin, which satisfy
		\begin{equation} \label{eq:inv_MEAS}
		(A_p+B_pK_p)\Omega_p \oplus L_p (C_p\Psi_p \oplus \V_p) \subseteq \Omega_p.
		\end{equation}
		Then, with $\mu(\bar{\nu}_c,\hat{x}_p,\bar{x}_p)\coloneqq\bar{\nu}_c$ and $\eta(\hat{x}_p,\tilde{x}_p,\bar{x}_p)\coloneqq K_p(\hat{x}_p-\bar{x}_p)$, $\Omega\coloneqq\Omega_p\times \{0\}\times\{0\}$ is $[1,H]$ RCI for the control error $e$ for any $H\in\N$.
	\end{lemma}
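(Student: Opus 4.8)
The plan is to track the three blocks of the control error $e(i)=\hat{x}(i)-\bar{x}(i)$ separately — the plant-state block $e_p(i)\coloneqq[I\;0\;0]e(i)$, the input-hold block, and the bucket block — and to show that only the first is nontrivial, obeying a clean one-step linear recursion that the invariance hypothesis \eqref{eq:inv_MEAS} then absorbs directly.

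First I would substitute the proposed terms $\mu(\bar{\nu}_c,\hat{x}_p,\bar{x}_p)=\bar{\nu}_c$ and $\eta(\hat{x}_p,\tilde{x}_p,\bar{x}_p)=K_p(\hat{x}_p-\bar{x}_p)=K_pe_p$ into the inputs \eqref{eq:multi_step_error_feedback}, so that $u(0)$ has plant-feedback $K_pe_p(0)$ with transmitted control update $\bar{\nu}_c(0)$ and $\gamma(0)=1$, while $u(i)$ has plant-feedback $K_pe_p(i)$ with $\gamma(i)=0$ for $i\in\N_{[1,H-1]}$. Since $\gamma(i)=\bar{\gamma}(i)$ at every step and, because $e(0)\in\Omega=\Omega_p\times\{0\}\times\{0\}$ has vanishing bucket block, the observer and nominal bucket states start equal and evolve under identical dynamics, the bucket block of $e$ is identically zero. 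For the hold block, at $i=0$ both systems receive the same value ($\gamma(0)=\bar{\gamma}(0)=1$ and the transmitted $\mu=\bar{\nu}_c(0)$ coincides with the nominal control update), giving $\hat{u}_s(1)=\bar{u}_s(1)=\bar{\nu}_c(0)$; for $i\in\N_{[1,H-1]}$ the choice $\gamma(i)=0$ freezes both holds, so the hold block of $e$ vanishes as well throughout $\N_{[1,H]}$.

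The key step is the plant block. Using the dynamics \eqref{eq:system}, \eqref{eq:system_obs}, \eqref{eq:system_nom} together with the vanishing of the hold block, the coupling term $B_p(\hat{u}_s(i)-\bar{u}_s(i))$ drops out and I expect the recursion to collapse to $e_p(i+1)=(A_p+B_pK_p)e_p(i)+\delta_p(i)$ for every $i\in\N_{[0,H-1]}$, where $\delta_p(i)\coloneqq[I\;0\;0]\delta(i)\in L_p(C_p\Psi_p\oplus\V_p)$. The feature distinguishing this case from Lemmas \ref{lem:RPI_set_ZOH} and \ref{lem:RPI_set_PRED} is that the local measurement actuator, running its own observer copy, applies $K_pe_p(i)$ against the \emph{current} observer state at every step; hence the transition matrix is always $A_p+B_pK_p$, rather than $A_p^i+B_p^iK_p$ or $(A_p+B_pK_p)^i$, and no accumulation of the disturbance over the inter-transmission interval occurs.

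Finally I would close by induction: with $e_p(0)\in\Omega_p$ (the plant block of $e(0)\in\Omega$), one step of the recursion lands in $(A_p+B_pK_p)\Omega_p\oplus L_p(C_p\Psi_p\oplus\V_p)\subseteq\Omega_p$ by \eqref{eq:inv_MEAS}, so $e_p(i)\in\Omega_p$ for all $i$, whence $e(i)\in\Omega_p\times\{0\}\times\{0\}=\Omega$ for all $i\in\N_{[1,H]}$. Since this argument never invokes the length $H$, it holds for any $H\in\N$, matching the statement. The only real care required is the bookkeeping in the second paragraph — verifying that the hold and bucket blocks genuinely vanish so that the recursion truly decouples into the one-step form; everything afterwards is a direct application of the invariance condition.
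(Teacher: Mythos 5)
Your proposal is correct and follows essentially the same route as the paper's proof: both reduce the control error to the one-step recursion $e(i+1)=\mathrm{diag}\{A_p+B_pK_p,0,0\}\,e(i)+\delta(i)$ (after noting that the hold and bucket blocks of $e$ vanish, i.e.\ $\hat{u}_s(i)=\bar{u}_s(i)$ and $\hat{\beta}(i)=\bar{\beta}(i)$), and then close by induction via \eqref{eq:inv_MEAS}, with independence of $H$ following because the recursion is the same at every step. Your extra bookkeeping on why the hold block stays zero at $i=0$ is a correct, slightly more explicit version of what the paper leaves implicit.
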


	In contrast to the ZOH and the prediction-based actuator, the RCI set for the local measurement actuator is independent of the open-loop length $H$. Furthermore, we note that \eqref{eq:inv_MEAS} is equivalent to the condition for when a set is RPI \cite{kolmanovsky1998theory}. As a result, standard methods for computing RPI sets can be used to construct a suitable $\Omega_p$ \cite{rakovic2005invariant}.
	
	\subsection{Bound for error between real and nominal system}
	
	Having bounded the estimation error as well as the control error, we are now able to bound the error between real and nominal state. As discussed above, this will make it possible to tighten the constraint sets for the nominal system used in the predictions, such that the real state will be guaranteed to satisfy the original constraints. In the following proposition, we establish that under any of the three error feedback strategies, the error between real and nominal state remains in the set $\Omega\oplus\Psi$ for at least $H$ time steps if $\phi'$ is applied if there is a transmission, and $\phi''$ is applied if not. The proof is straightforward and therefore omitted.
	
	\begin{proposition} \label{prop:boundedness_error}
		Suppose there exist $L_p,\Psi_p$ such that the conditions of Lemma \ref{lem:RPI_obs} are satisfied, and there exist $H,K_p,\Omega_p$ such that the conditions of either Lemma \ref{lem:RPI_set_ZOH}, \ref{lem:RPI_set_PRED} or \ref{lem:RPI_set_MEAS} are satisfied. In addition, suppose for some $t\in\N_0$, $x(t)$, $\hat{x}(t)$, and $\{\bar{u}(i)\}_{i=t}^{t+h-1}$, $h\in\N_{[1,H]}$, are given and the latter satisfies $\bar{\gamma}(t)=1$ and $\bar{\gamma}(i)=0$, $i\in\N_{[t+1,t+h-1]}$. Furthermore, suppose the initial states satisfy  $e(t)=\hat{x}(t)-\bar{x}(t)\in\Omega$, $\epsilon(t)=x(t)-\hat{x}(t)\in\Psi$ and $\tilde{x}(t) \coloneqq \hat{x}(t)$.
		
		Then, if it holds that
		\begin{equation*}
			u(i) \coloneqq \begin{cases} \phi'(\bar{u}(i),\hat{x}(i),\tilde{x}(i),\bar{x}(i)) & \bar{\gamma}(i) = 1 \\
			\phi''(\bar{u}(i),\hat{x}(i),\tilde{x}(i),\bar{x}(i)) & \bar{\gamma}(i) = 0 \\
			\end{cases},
		\end{equation*}
		$i\in\N_{[t,t+h-1]}$, with $\mu$ and $\eta$ either from Lemma \ref{lem:RPI_set_ZOH}, \ref{lem:RPI_set_PRED} or \ref{lem:RPI_set_MEAS}, then
		the observer state satisfies $\hat{x}(i)\in\{\bar{x}(i)\}\oplus\Omega$ and the real state satisfies $x(i)\in\{\hat{x}(i)\}\oplus\Psi\subseteq\{\bar{x}(i)\}\oplus\Omega\oplus\Psi$ for all $i\in\N_{[0,h]}$ and for any $\{w(i)\}_{i=t}^{t+h-1}\in \W^h$, $\{v(i)\}_{i=t}^{t+h-1}\hspace{-0.5 pt}\in\hspace{-0.5 pt}\V^h$.
	\end{proposition}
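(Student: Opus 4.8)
The plan is to decompose the error between the real and nominal state via the identity $x(i)=\bar{x}(i)+e(i)+\epsilon(i)$ established in the text, and to bound its two constituents separately: the estimation error $\epsilon$ through the RPI property of Lemma \ref{lem:RPI_obs}, and the control error $e$ through the $[1,H]$ RCI property of whichever of Lemma \ref{lem:RPI_set_ZOH}, \ref{lem:RPI_set_PRED} or \ref{lem:RPI_set_MEAS} is in force.

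First, I would show that the estimation error never leaves $\Psi$. Since $\epsilon(t)\in\Psi$ by assumption and $\Psi$ is RPI by Lemma \ref{lem:RPI_obs}, a direct induction over the horizon yields $\epsilon(i)\in\Psi$ for every realization $\{w(i)\}_{i=t}^{t+h-1}\in\W^h$, $\{v(i)\}_{i=t}^{t+h-1}\in\V^h$. As an immediate consequence, the observer disturbance $\delta(i)=B_\epsilon\epsilon(i)+B_vv(i)$ lies in $\Delta$ for all $i$, which is precisely the disturbance bound required to invoke Definition \ref{def:RCI_set}.

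Second, I would verify that the hypotheses at time $t$ match the premises of Definition \ref{def:RCI_set} under the time shift $t\mapsto 0$, and thereby conclude $e(i)\in\Omega$. The transmission pattern $\bar{\gamma}(t)=1$, $\bar{\gamma}(i)=0$ for $i\in\N_{[t+1,t+h-1]}$ reproduces the nominal control $\bar{\nu}$ of \eqref{eq:multi_step_control} with $\bar{\nu}_c(0)=\bar{u}_c(t)$; the initialization $\tilde{x}(t)=\hat{x}(t)$ and $e(t)\in\Omega$ are assumed directly; and $\{\delta(i)\}_{i=t}^{t+h-1}\in\Delta^h$ was shown in the first step (padding with arbitrary elements of $\Delta$ if $h<H$, which is harmless since, by causality, the conclusion for indices up to $h$ depends only on $\delta(0),\dots,\delta(h-1)$). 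Since the chosen $\mu,\eta$ satisfy the respective invariance condition \eqref{eq:inv_ZOH}, \eqref{eq:inv_PRED} or \eqref{eq:inv_MEAS}, Definition \ref{def:RCI_set} then yields $e(i)\in\Omega$ for all $i\in\N_{[1,h]}\subseteq\N_{[1,H]}$. Combining the two bounds gives $\hat{x}(i)=\bar{x}(i)+e(i)\in\{\bar{x}(i)\}\oplus\Omega$ and $x(i)=\hat{x}(i)+\epsilon(i)\in\{\hat{x}(i)\}\oplus\Psi\subseteq\{\bar{x}(i)\}\oplus\Omega\oplus\Psi$, as claimed.

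The only genuinely delicate point is the identification in the second step: the proposition allows a general nominal input sequence $\{\bar{u}(i)\}$, whereas Definition \ref{def:RCI_set} fixes the nominal control to $\bar{\nu}$. This is reconciled by observing that for $i\ge 1$ the nominal control updates are never transmitted (as $\bar{\gamma}(i)=0$) and hence do not enter the nominal trajectory, while the single transmitted update $\bar{u}_c(t)$ cancels out of the control-error recursion, since the same value drives both the nominal and the observer/auxiliary plant state in each of the three feedback schemes. Thus the control error evolves identically to the setting of Definition \ref{def:RCI_set}, independently of the particular value of $\bar{u}_c$, which is exactly why the RCI property transfers and the proof reduces to assembling the two invariance results.
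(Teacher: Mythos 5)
Your proof is correct and takes exactly the route the paper intends (the paper omits this proof as ``straightforward''): bound the estimation error by the RPI property of Lemma~\ref{lem:RPI_obs} so that $\delta(i)\in\Delta$, then recognize that the transmission pattern, the initialization $\tilde{x}(t)=\hat{x}(t)$, $e(t)\in\Omega$, and the applied inputs $\phi',\phi''$ place the control error exactly in the setting of Definition~\ref{def:RCI_set}, so that the relevant Lemma~\ref{lem:RPI_set_ZOH}, \ref{lem:RPI_set_PRED} or \ref{lem:RPI_set_MEAS} gives $e(i)\in\Omega$, and finally add the two sets. Your handling of the two delicate points --- padding $\{\delta(i)\}$ to length $H$ when $h<H$ via causality, and the irrelevance of the untransmitted nominal control updates (Definition~\ref{def:RCI_set} already quantifies over all $\bar{\nu}_c(0)\in\U_p$) --- is precisely what is needed.
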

	
	\section{Robust Rollout ETC} \label{sec:MPC_scheme}
	
	In this section, we detail the proposed control scheme. First, we will introduce some preliminaries: The cyclic prediction horizon, the constraint tightening for the predictions, and a constraint on the predicted transmission schedule to enforce that inter-transmission intervals are at most $H$ time steps. Second, we will present the robust rollout ETC algorithm. Third, we will elaborate on its theoretical properties and establish recursive feasibility, robust constraint satisfaction and convergence.
	
	\subsection{Preliminaries: cyclic prediction horizon, constraint tightening and schedule constraint}
	
	Rollout ETC requires special care when designing the underlying OCP. Since the transmission of a control update is not possible in every time step, standard techniques from MPC to ensure recursive feasibility and convergence cannot be directly applied. To guarantee these properties nonetheless, we employ a cyclically time-varying prediction horizon \cite{Wildhagen19,Koegel13} in place of a traditional constant horizon. For a cycle length $M\in\N$ and a maximum horizon $\overline{N}\in\N_{\ge M}$, the prediction horizon at time $k$ is given by (see also \cite[Figure 1]{Wildhagen19})
	\begin{equation} \label{eq:shrinking_horizon}
	N(k) \coloneqq \overline{N}-(k\mod M).
	\end{equation}
	
	As asserted in Proposition \ref{prop:boundedness_error}, the error between real and nominal state is bounded by $\Omega\oplus\Psi$ regardless of the disturbance realization. Hence, we tighten the constraint sets according to
	\begin{align}
	\hat{\X} &\coloneqq (\X_p\ominus\Psi_p)\times (\U_p\ominus K_p\Omega_p) \times \N_{[0,b]}, \nonumber \\
	\bar{\X} &\coloneqq (\X_p\ominus\Omega_p\ominus\Psi_p)\times (\U_p\ominus K_p\Omega_p) \times \N_{[0,b]}, \label{eq:tightened_constraints} \\
	\bar{\U} &\coloneqq (\U_p\ominus K_p\Omega_p) \times \{0,1\} \times\{0\}, \nonumber
	\end{align}
	where $\bar{\X}$ and $\bar{\U}$ are to be used in the prediction.

	However, recall that the error between real and nominal state is only bounded if the inter-transmission interval is no longer than $H$ time steps. To ensure the latter in closed loop, we introduce a counter $s(k)$ to keep track of the time since the last transmission occurred. To be precise, the counter is defined such that at time $k$, the last transmission was at $k-s(k)-1$ (please see Algorithm \ref{scheme_MPC} for an explicit definition of $s(k)$). The idea is now to require that the predicted transmission schedules lie in a certain set, denoted by $\Gamma_N^H(s(k))$, as a constraint in the OCP in order to ensure a transmission at least every $H$ time steps in closed-loop. The set $\Gamma_N^H(s)$ is defined as the set of all schedules of length $N$, wherein the first transmission occurs after at most $H-s-1$ steps and in which the remaining ones are at most $H$ steps apart. If $N$ is smaller than $H-s$, the schedule needs not contain a transmission. In order to introduce a formal definition of this set, we define first the ordered sequence of indices for which a schedule $\gamma(\cdot)\in\{0,1\}^N$ is $1$, according to
	\begin{equation*}
	\tau_{\gamma(\cdot)}(\cdot) \coloneqq \{j\in\N_{[0,N-1]}|\gamma(j)=1\} \in\N_{[0,N-1]}^{n_{\gamma(\cdot)}},
	\end{equation*}
	where $n_{\gamma(\cdot)}\coloneqq\sum_{i=0}^{N-1}\gamma(i)$ is the number of transmissions in $\gamma(\cdot)$. Then, a formal definition of $\Gamma_N^H(s)$ is given by
	\begin{align}
	&\Gamma_N^H(s) \coloneqq \{\gamma(\cdot)\in\{0,1\}^N|(\tau_{\gamma(\cdot)}(0) \le H-s-1 \nonumber \\
	&\land \tau_{\gamma(\cdot)}(j+1)-\tau_{\gamma(\cdot)}(j) \le H, \; \forall j\in\N_{[0,n_{\gamma(\cdot)} - 2]} \label{def_scheduling_constraint} \\
	&\land N-\tau_{\gamma(\cdot)}(n_{\gamma(\cdot)}) \le H-1) \lor\; N\le H-s-1\}.  \nonumber
	\end{align}
	
	\subsection{Robust rollout ETC algorithm}
	
	In this subsection, we detail the robust rollout ETC algorithm. To this end, we introduce the predicted nominal state and input trajectories at time $k$ as $\bar{x}(\cdot|k)\coloneqq\{\bar{x}(0|k),$ $\ldots,\bar{x}(N(k)|k)\}$ and $\bar{u}(\cdot|k)\coloneqq\{\bar{u}(0|k),\ldots,\bar{u}(N(k)-1|k)\}$. The objective function is defined by
	\begin{align*}
	V(\bar{x}(\cdot|k),\bar{u}&(\cdot|k),k) \coloneqq \lambda(\bar{x}(0|k)) \\
	&+ \sum_{i=0}^{N(k)-1} \ell(\bar{x}(i|k),\bar{u}(i|k)) + V_f(\bar{x}(N(k)|k)),
	\end{align*}
	where $\lambda(\bar{x})\coloneqq\bar{u}_s^\top S \bar{u}_s$, $S\in\R^{m_p\times m_p}$, $R\succeq S \succ 0$ is an additional weight on the initial predicted state, and $V_f:\bar{\X}\to\R$ is the terminal cost. The stage cost is given by
	\begin{equation} \label{eq:stage_cost}
	\ell(\bar{x},\bar{u})\coloneqq \bar{x}_p^\top Q \bar{x}_p + (1-\bar{\gamma})\bar{u}_s^\top R \bar{u}_s + \gamma \bar{u}_c^\top R \bar{u}_c,
	\end{equation}
	which is the quadratic cost associated with the plant \eqref{eq:cost_plant}.
	\begin{remark}
		The term $\lambda$ is included in the objective function to establish convergence later, where the analysis relies on results from tube-based economic MPC \cite{Bayer14,Dong18_2}. It is indeed essential for this guarantee, since the stage cost \eqref{eq:stage_cost} is not positive definite w.r.t. $\bar{u}_s$. Note that here, the effect of $\lambda$ on the closed loop can be made negligibly small since $S$ can be chosen arbitrarily small as long as it is positive definite.
	\end{remark}
	
	Now, we are ready to state the OCP for robust rollout ETC. Given the observer state $\hat{x}(k)$, the nominal state $\bar{x}(k)$ and the counter $s(k)$, we define the mixed-integer quadratic program\footnote{The numerical complexity of solving \eqref{eq:MPC_OP} is discussed in \cite[Section I]{Wildhagen20}.} $\mathcal{P}(\hat{x}(k),\bar{x}(k),s(k),k)$, solved in each time step $k$, as
	\begin{subequations} \label{eq:MPC_OP}
		\begin{align}
		&\min_{\bar{x}(\cdot|k),\bar{u}(\cdot|k)} V(\bar{x}(\cdot|k),\bar{u}(\cdot|k),k)  \nonumber \\
		\text{s.t. } &\begin{cases}	\hat{x}(k)\in\{\bar{x}(0|k)\}\oplus\Omega & \bar{\gamma}(0|k) = 1 \label{constr_IC}\\
		\bar{x}(0|k)=\bar{x}(k) & \bar{\gamma}(0|k) = 0 \end{cases}  \\
		&\bar{x}(i+1|k) = f(\bar{x}(i|k),\bar{u}(i|k)) \label{constr_dynamics} \\
		&\bar{x}(i|k) \in\bar{\X}, \; \bar{u}(i|k) \in \bar{\U}, \quad \forall i\in\N_{[0,N(k)-1]} \label{constr_state_input}  \\
		&\bar{\gamma}(\cdot|k)\in\Gamma_{N(k)}^H(s(k)) \label{constr_scheduling} \\
		&\bar{x}(N(k)|k) \in \bar{\X}_f \label{constr_terminal}
		\end{align}
	\end{subequations}
	with the closed terminal region $\bar{\X}_f\subseteq\bar{\X}$. We comment on the exact choice of terminal ingredients to ensure recursive feasibility and convergence in Subsection \ref{sec:rec_feas_conv}.
	
	We denote by the superscript $^*$ the nominal state and input trajectories that solve $\mathcal{P}$. The robust rollout ETC operates according to Algorithm \ref{scheme_MPC}. Hence, the triggering and control law \eqref{eq:impl_trig_contr_law} is given by
	\begin{align*}
	[u_c(k)^\top \: \gamma(k)]^\top&=\kappa_{\text{RETC}}(y_p(k),k) \\
	&\coloneqq\begin{bmatrix} \bar{\gamma}^*(0|k)\mu(\bar{u}_c^*(0|k),\hat{x}_p(k),\bar{x}_p^*(0|k)) \\ \bar{\gamma}^*(0|k)
	\end{bmatrix},
	\end{align*}
	where $\bar{\gamma}^*(0|k),\bar{u}_c^*(0|k)$ and $\bar{x}_p^*(0|k)$ depend implicitly on $y_p(k)$ through the observer and the solution to the time-varying OCP \eqref{eq:MPC_OP}. Furthermore, the error feedback $u_e(k)\coloneqq\eta(\hat{x}_p(k),\tilde{x}_p(k),\bar{x}_p^*(0|k))$ is applied to the plant. The terms $\mu,\eta$ depend on the chosen actuator.

	\begin{algorithm}[h]
		\begin{Algorithm}\label{scheme_MPC}
			\normalfont{\textbf{Robust Rollout ETC Algorithm}}
			\begin{enumerate}
				\item[0)] Set $k=0$, enforce $\bar{\gamma}^*(0|0)=1$, choose an arbitrary $\hat{x}(0)$ that fulfills $x(0)\in\{\hat{x}(0)\}\oplus\Psi$, an $\bar{x}(0)$ that fulfills $\hat{x}(0)\in\{\bar{x}(0)\}\oplus\Omega$, and set $s(0)=0$.
				\item[1)] Solve $\mathcal{P}(\hat{x}(k),\bar{x}(k),s(k),k)$.
				\item[2)] Measure $y(k)$. Set $\bar{x}(k)\coloneqq\bar{x}^*(0|k)$ and apply the nominal input $\bar{u}(k)\coloneqq\bar{u}^*(0|k)$ to the nominal system \eqref{eq:system_nom}. \\
				\underline{Case $\bar{\gamma}^*(0|k)=1$:} Set $\tilde{x}(k)\coloneqq\hat{x}(k)$. Apply the error feedback
				\begin{equation*}
				\begin{aligned}
				u(k) &= \phi'(\bar{u}(k),\hat{x}(k),\tilde{x}(k),\bar{x}(k)) \\
				&= \left[\substack{\mu(\bar{u}_c^*(0|k),\hat{x}_p(k),\bar{x}_p^*(0|k))\\ 1 \\
					\eta(\hat{x}_p(k),\tilde{x}_p(k),\bar{x}_p^*(0|k))}\right]
				\end{aligned}
				\end{equation*}
				to the real system \eqref{eq:system}, the observer system \eqref{eq:system_obs} and \\ the auxiliary system \eqref{eq:system_aux}. Set $s(k+1)\coloneqq0$.
				\underline{Case $\bar{\gamma}^*(0|k)=0$:} Apply the error feedback
				\begin{equation*}
				\begin{aligned}
				u(k) &= \phi''(\bar{u}(k),\hat{x}(k),\tilde{x}(k),\bar{x}(k)) \\  &=\left[\substack{0\\ 0\\\eta(\hat{x}_p(k),\tilde{x}_p(k),\bar{x}_p^*(0|k)) }\right]
				\end{aligned}
				\end{equation*}
				to the real system \eqref{eq:system}, the observer system \eqref{eq:system_obs} and \\ the auxiliary system \eqref{eq:system_aux}. Set $s(k+1)\coloneqq s(k)+1$.
				\item[3)] Set $k\leftarrow k + 1$ and go to 1).
			\end{enumerate}
		\end{Algorithm}
	\end{algorithm}
	
	\begin{remark}
		In Step 2) of Algorithm \ref{scheme_MPC}, the sensor only needs to measure $y_p$, whereas $u_s$ and $\beta$ are internal variables. Likewise, $\hat{x},\tilde{x}$ and $s$ are internal variables of the controller and actuator, which justifies writing $\kappa_{\text{RETC}}$ as being dependent on the plant output $y_p$ and time $k$  only.
	\end{remark}
	\begin{remark}
		We enforce $\bar{\gamma}^*(0|0)=1$ to be able to give theoretical guarantees later. This requires $\beta_0\ge c-g$.
	\end{remark}
	
	 Note that according to \eqref{constr_IC}, the initial nominal state $\bar{x}(0|k)$ is an optimization variable only if there is a transmission predicted at initial time, otherwise it is fixed to $\bar{x}(0|k)=\bar{x}(k)=\bar{x}^*(1|k-1)$. This can be seen as a mixed strategy between \cite[Section 6]{Bayer14}, \cite{Dong18_2,mayne2006robust_output} where the initial state is always an optimization variable, and \cite[Section 3]{Bayer14} where the initial state is always fixed. The reason for this mixed strategy is the following: If one would reoptimize the nominal plant state although no transmission takes place, then one could also not send an updated version of the error feedback to the actuator. As a result, one could not guarantee that the control error is bounded by $\Omega$. However, if there is a transmission to the actuator scheduled, the nominal plant state can indeed be reoptimized since also the error feedback can be updated. We note that all theoretical guarantees (cf. Subsection \ref{sec:rec_feas_conv}) would be retained if one would enforce $\bar{x}(0|k)=\bar{x}(k)$ also at times when a transmission does take place. However, in this case there would be no feedback from the plant to the controller whatsoever. This is indeed provided by our mixed strategy, and typically results in better performance \cite{Bayer14}.

	\subsection{Theoretical properties: recursive feasibility, robust constraint satisfaction and convergence} \label{sec:rec_feas_conv}
	
	In this subsection, we derive conditions on the terminal region and cost such that recursive feasibility of the OCP, satisfaction of the original constraints and convergence to a region around the origin can be guaranteed. We design the terminal ingredients similarly as in \cite{Wildhagen19_2}, where the cycle length $M\coloneqq\lceil \frac{c}{g}\rceil$ is set to the base period of the token bucket TS.
	\begin{assumption} \label{ass:term_inv}
		There exist a closed set $\bar{\X}_{f,p} \subseteq \X_p\ominus\Omega_p\ominus\Psi_p$, containing the origin, and $K_{f,p}\in\R^{m_p\times n_p}$, such that $K_{f,p}\bar{\X}_{f,p}\subseteq\U_p\ominus K_p\Omega_p$, $(A_p^i+B_p^i K_{f,p})\bar{\X}_{f,p}\subseteq\X_p\ominus\Omega_p\ominus\Psi_p$ for all $i\in\N_{[1,M-1]}$ and $(A_p^M+B_p^M K_{f,p})\bar{\X}_{f,p}\subseteq\bar{\X}_{f,p}$.
	\end{assumption}
	\begin{assumption} \label{ass:term_decr}
		There exists a $P_{f,p}\succ 0\in\R^{n_p\times n_p}$ such that
		\begin{align*}
		&(A_p^M + B_p^M K_{f,p})^\top P_{f,p} (A_p^M + B_p^M K_{f,p}) - P_{f,p} \\
		&\le \hspace{-1.5pt}-\hspace{-2.8pt}\sum_{i=0}^{M-1}\hspace{-2pt} (A_p^i \hspace{-1pt}+\hspace{-1pt} B_p^i K_{f,p})^\top\hspace{-1pt} Q_p\hspace{-1pt} (A_p^i \hspace{-1pt}+\hspace{-1pt} B_p^i K_{f,p})\hspace{-1pt} -\hspace{-1pt} M K_{f,p}^\top R_p K_{f,p}.
		\end{align*}
	\end{assumption}
	\begin{remark}
		If the pair $(A_p^M,B_p^M)$ is controllable, an admissible choice for $K_{f,p}$ and $P_{f,p}$ is given in \cite[Lemmas 1,2]{Wildhagen19_2}. If $\X_p\ominus\Omega_p\ominus\Psi_p$ and $\U_p\ominus K_p\Omega_p$ are polytopic, a suitable set $\bar{\X}_{f,p}$ can be constructed using methods as in \cite[Remark 2]{Wildhagen20}. \label{rem_construction terminal_ingredients}
	\end{remark}
	
	With these two assumptions, we choose the terminal cost $V_f(\bar{x})\coloneqq \bar{x}_p^\top P_{f,p} \bar{x}_p$, the terminal region
	\begin{equation*}
	\bar{\X}_f \coloneqq \bar{\X}_{f,p} \times(\U_p\ominus K_p\Omega_p)\times\N_{[c-g,b]},	\end{equation*}
	and the terminal control sequence $\kappa_0(x) \coloneqq[(K_{f,p} \bar{x}_p)^\top 1]^\top$ and $\kappa_i(x) \coloneqq [0\; 0]^\top$ for all $j \in \N_{[1,M-1]}$. We note that $K_{f,p}$ can be different from $K_p$.
	
	\begin{assumption} \label{ass:RPI_sets}
		There exist $L_p,\Psi_p$ such that the conditions of Lemma \ref{lem:RPI_obs} are satisfied and $K_p,\Omega_p$ and $H\ge M$ such that the conditions of Lemmas \ref{lem:RPI_set_ZOH}, \ref{lem:RPI_set_PRED} or \ref{lem:RPI_set_MEAS} are satisfied. Also, $\X_p\ominus\Omega_p\ominus\Psi_p$ and $\U_p\ominus K_p\Omega_p$ are closed and contain the origin.
	\end{assumption}
	
	As an intermediate step towards recursive feasibility, the following result establishes that under application of Algorithm \ref{scheme_MPC}, there is indeed a transmission at least every $H$ time steps in closed loop. The proof can be found in Appendix \ref{app:Q_step_closed_loop}.
	
	\begin{lemma} \label{lem:H_step_closed_loop}
		Suppose $\overline{N}\ge H\ge M=\lceil \frac{c}{g}\rceil$ and for a $\overline{k}\in\N_{0}$, $\mathcal{P}(x(k),\bar{x}(k),s(k),k)$ is feasible for all $k\in\N_{[0,\overline{k}-1]}$. Then for any $k\in\N_{[0,\overline{k}-H]}$, it holds that $\sum_{i=k}^{k+H-1} \bar{\gamma}(i) \ge 1$.
	\end{lemma}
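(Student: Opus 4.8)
The plan is to track the closed-loop transmission counter $s$ and exploit the scheduling constraint \eqref{constr_scheduling}, which is precisely the device engineered to cap the inter-transmission interval. First I would record the closed-loop counter dynamics dictated by Step 2 of Algorithm \ref{scheme_MPC}: writing $\bar{\gamma}(k)\coloneqq\bar{\gamma}^*(0|k)$ for the applied transmission decision, one has $s(k+1)=0$ whenever $\bar{\gamma}(k)=1$ and $s(k+1)=s(k)+1$ whenever $\bar{\gamma}(k)=0$, with $s(0)=0$. Thus along any run of consecutive non-transmissions the counter simply increments.

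The key structural observation is that the scheduling set $\Gamma_N^H(s)$ in \eqref{def_scheduling_constraint} is \emph{empty} as soon as $s\ge H$: the first clause would require $\tau_{\gamma(\cdot)}(0)\le H-s-1<0$, which is impossible, while the alternative clause would require $N\le H-s-1<0$, contradicting $N\ge 1$. Since feasibility of $\mathcal{P}(\cdot,\cdot,s(k),k)$ entails satisfiability of \eqref{constr_scheduling}, it follows that $s(k)\le H-1$ for every $k\in\N_{[0,\overline{k}-1]}$. I would also note that under $\overline{N}\ge H\ge M$ the cyclic horizon always obeys $N(k)=\overline{N}-(k\bmod M)\ge\overline{N}-(M-1)\ge H-M+1\ge 1$, so $N(k)\ge 1$ is available throughout.

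Next I would argue by contradiction. Fix $k\in\N_{[0,\overline{k}-H]}$ and suppose $\bar{\gamma}(i)=0$ for all $i\in\N_{[k,k+H-1]}$. The counter then increments at each of these steps, giving $s(k+H-1)=s(k)+(H-1)$. Because $k+H-1\le\overline{k}-1$, the feasibility bound from the previous paragraph yields $s(k+H-1)\le H-1$, which forces $s(k)=0$ and hence $s(k+H-1)=H-1$. Evaluating \eqref{constr_scheduling} at time $k+H-1$ with $s=H-1$, the alternative clause would demand $N(k+H-1)\le H-(H-1)-1=0$, impossible since $N(k+H-1)\ge 1$; therefore the first clause of $\Gamma_{N(k+H-1)}^H(H-1)$ must hold, which requires a transmission with $\tau_{\bar{\gamma}^*(\cdot|k+H-1)}(0)\le 0$, i.e. $\bar{\gamma}^*(0|k+H-1)=\bar{\gamma}(k+H-1)=1$. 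This contradicts $\bar{\gamma}(k+H-1)=0$, so $\sum_{i=k}^{k+H-1}\bar{\gamma}(i)\ge 1$.

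I expect the main subtlety to be the alternative clause $N\le H-s-1$ in \eqref{def_scheduling_constraint}, which deliberately relaxes the transmission requirement when the remaining horizon is too short to place a transmission; naively it could allow the counter to keep growing near the end of a cycle. The argument above isolates exactly the instant ($s=H-1$) at which this relaxation would demand $N\le 0$, so that the hypothesis $\overline{N}\ge H\ge M$---which keeps the cyclic horizon at $N(k)\ge 1$---rules the clause out and forces the mandatory-transmission branch. The remaining work (the counter recursion and the index arithmetic on $\tau_{\gamma(\cdot)}$) is routine bookkeeping.
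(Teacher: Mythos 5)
Your proof is correct, and while it rests on the same underlying mechanism as the paper's --- arguing by contradiction that once the counter reaches $s=H-1$ the schedule constraint \eqref{constr_scheduling} forces $\bar{\gamma}^*(0|\cdot)=1$ and hence a closed-loop transmission --- your execution is genuinely more direct. The paper's proof assumes a transmission at the left end of the window (so that $s(k+i)=i-1$ is pinned down exactly), then tracks the cyclic horizon until it resets to $\overline{N}$ at some time $k+h$, and propagates the bound $N(k+i)\ge H-s(k+i)$ over all $i\in\N_{[h,H]}$ to rule out the relaxation clause $N\le H-s-1$ at each of those steps before reaching the contradiction at $i=H$. You instead isolate the observation that $\Gamma_N^H(s)=\emptyset$ whenever $s\ge H$ and $N\ge 1$; this simultaneously forces $s(k)=0$ under the contradiction hypothesis (so you handle an arbitrary window of length $H$ without assuming a transmission at its left end, which matches the lemma's actual statement more faithfully than the paper's inter-transmission-interval formulation, and keeps all required feasibility instants within $\N_{[0,\overline{k}-1]}$) and lets you conclude at the single time $k+H-1$, where $N(k+H-1)\ge\overline{N}-M+1\ge 1$ already kills the relaxation clause. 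Nothing is lost: the cyclic-horizon bookkeeping disappears entirely, and the hypothesis $\overline{N}\ge H\ge M$ is used only to guarantee $N(k)\ge 1$ (for which $\overline{N}\ge M$ would in fact suffice). The one point worth stating explicitly if you write this up is the convention that a schedule with no transmissions can belong to $\Gamma_N^H(s)$ only via the clause $N\le H-s-1$; this is how the paper itself reads \eqref{def_scheduling_constraint}, and your emptiness claim for $s\ge H$ depends on it.
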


	\begin{remark} \label{rem:size_H}
		The exact choice of $H$ poses a tradeoff: by increasing $H$, longer inter-transmission intervals can be tolerated, which gives more flexibility for scheduling. However, doing so would also enlarge $\Omega_p$ for the ZOH and prediction-based actuators (see $\eqref{eq:inv_ZOH}$ and \eqref{eq:inv_PRED}), which would result in tighter constraints for the nominal plant (see \eqref{eq:tightened_constraints}). In effect, the feasible set (the set of states such that $\mathcal{P}$ is feasible) would shrink as well. In contrast, in case of the local measurement actuator,  the size of $\Omega_p$ is independent of $H$ (see \eqref{eq:inv_MEAS}).
	\end{remark}
	
	Finally, the following two results establish recursive feasiblity, robust constraint satisfaction and convergence. Note that constraint satisfaction also implies satisfaction of the token bucket TS and thus of the bandwidth constraint. The proofs can be found in Appendices \ref{app:rec_feas} and \ref{app:convergence}, respectively.
	
	\begin{theorem}[Recursive Feasibility and Robust Constraint Satisfaction] \label{thm:rec_feas}
		Suppose $\mathcal{P}(\hat{x}(0),\bar{x}(0),0,0)$ with the constraint $\bar{\gamma}^*(0|0)=1$ is feasible, Assumptions \ref{ass:term_inv} and \ref{ass:RPI_sets} hold, and $\overline{N}\ge H \ge M=\lceil \frac{c}{g}\rceil$. Then, $\mathcal{P}$ is feasible for all $k\in\N_{0}$, the observer state satisfies $\hat{x}(k)\in\hat{\X}$ for all $k\in\N_{0}$, and the real state and input satisfy $x(k)\in\X$ and $u(k)\in\U$ for all $k\in\N_{0}$.
	\end{theorem}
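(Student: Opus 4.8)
The plan is to prove Theorem~\ref{thm:rec_feas} by induction on $k$, constructing at each step a feasible candidate solution for $\mathcal{P}$ at time $k+1$ from the optimal solution at time $k$. This is the standard recursive-feasibility argument in MPC, but it must be adapted to the cyclic prediction horizon \eqref{eq:shrinking_horizon}, the schedule constraint \eqref{constr_scheduling}, and the mixed initialization rule \eqref{constr_IC}. The base case is exactly the hypothesis that $\mathcal{P}(\hat{x}(0),\bar{x}(0),0,0)$ is feasible. For the induction step, I would distinguish two cases according to whether $N(k) > M$ (so the horizon merely shrinks by one, i.e. $N(k+1)=N(k)-1$) or the cycle resets ($N(k)=M$, so $N(k+1)=\overline{N}$), since the candidate must have exactly the length $N(k+1)$ dictated by the cyclic horizon.

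\emph{Constructing the candidate.} In the shrinking case, I would take the standard tail of the optimal trajectory, dropping the first element: the candidate nominal input is $\{\bar{u}^*(1|k),\dots,\bar{u}^*(N(k)-1|k)\}$ and the candidate nominal state starts at $\bar{x}^*(1|k)$. Feasibility of the dynamics \eqref{constr_dynamics} and of the tightened state/input constraints \eqref{constr_state_input} follows immediately from optimality at time $k$. The terminal constraint \eqref{constr_terminal} is where the terminal ingredients of Assumptions~\ref{ass:term_inv} and~\ref{ass:term_decr} enter: when the horizon reaches the end of a cycle, I would append the terminal control sequence $\kappa_0,\dots,\kappa_{M-1}$ and invoke the invariance properties $(A_p^i+B_p^iK_{f,p})\bar{\X}_{f,p}\subseteq\X_p\ominus\Omega_p\ominus\Psi_p$ and $(A_p^M+B_p^MK_{f,p})\bar{\X}_{f,p}\subseteq\bar{\X}_{f,p}$ to verify that the appended piece satisfies both the tightened constraints and lands back in $\bar{\X}_f$. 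The terminal region's $\beta$-component $\N_{[c-g,b]}$ guarantees that the terminal control's scheduled transmission ($\gamma=1$ in $\kappa_0$) is admissible under the bucket dynamics.

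\emph{The schedule constraint and the mixed initialization} are the two genuinely nonstandard obstacles. For \eqref{constr_scheduling}, I would use Lemma~\ref{lem:H_step_closed_loop}, together with $H\ge M$, to argue that the shifted candidate schedule still lies in $\Gamma_{N(k+1)}^H(s(k+1))$: shifting the optimal schedule forward changes the leading gap according to the update rule for $s$ in Algorithm~\ref{scheme_MPC}, and the condition $\tau(0)\le H-s-1$ must be rechecked against the new counter value. Verifying the initialization \eqref{constr_IC} requires care: if $\bar{\gamma}^*(0|k)=1$, then $\bar{x}(k+1)=\bar{x}^*(1|k)$ is determined by Step~2) of the algorithm, and the candidate must respect whichever branch of \eqref{constr_IC} applies at time $k+1$ depending on $\bar{\gamma}^*(1|k)$. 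The hardest point will be showing the candidate is consistent with the constraint that the initial nominal state is free only when a transmission is scheduled; this couples the schedule constraint to the initialization rule and is the reason the mixed strategy was introduced.

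\emph{From recursive feasibility to robust constraint satisfaction} is then the payoff of the machinery already developed. Having established that $\mathcal{P}$ is feasible for all $k$, and since the algorithm guarantees a transmission at least every $H$ steps (Lemma~\ref{lem:H_step_closed_loop} with $H\ge M$), I would apply Proposition~\ref{prop:boundedness_error} on each inter-transmission interval, whose length is at most $H$ by construction. The proposition yields $x(k)\in\{\bar{x}(k)\}\oplus\Omega\oplus\Psi$ and $\hat{x}(k)\in\{\bar{x}(k)\}\oplus\Omega$ for the relevant time instants. Combining these inclusions with the tightened constraints \eqref{eq:tightened_constraints}---namely $\bar{x}_p(k)\in\X_p\ominus\Omega_p\ominus\Psi_p$ and the nominal input in $\U_p\ominus K_p\Omega_p$---and unwinding the Pontryagin differences via the defining property $A\ominus B\oplus B\subseteq A$, I would conclude $x(k)\in\X$, $\hat{x}(k)\in\hat{\X}$, and $u(k)\in\U$. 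The input inclusion additionally requires that the applied $u_p=\mu+\eta$ (for the transmitting step) or the held value plus $\eta$ (otherwise) lands in $\U_p$, which follows from $K_p\Omega_p$ being exactly the tube term subtracted in \eqref{eq:tightened_constraints} together with the range of $\mu,\eta$ guaranteed by whichever of Lemmas~\ref{lem:RPI_set_ZOH}--\ref{lem:RPI_set_MEAS} is in force under Assumption~\ref{ass:RPI_sets}.
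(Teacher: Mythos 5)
Your proposal follows essentially the same route as the paper's proof: a tail candidate (appended with the terminal control sequence $\kappa_0,\dots,\kappa_{M-1}$ when the cyclic horizon resets), Lemma~\ref{lem:H_step_closed_loop} to bound the inter-transmission interval by $H$, Proposition~\ref{prop:boundedness_error} applied over that interval to verify the initialization constraint \eqref{constr_IC} and the tube containments $\hat{x}(k)\in\{\bar{x}(k)\}\oplus\Omega$, $x(k)\in\{\bar{x}(k)\}\oplus\Omega\oplus\Psi$, and an unwinding of the tightened sets \eqref{eq:tightened_constraints} (with a per-actuator check of $u_c+u_e\in\U_p$) for robust constraint satisfaction. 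One small correction: Assumption~\ref{ass:term_decr} plays no role here (and is not among the theorem's hypotheses) --- only the invariance properties of Assumption~\ref{ass:term_inv} are needed for the terminal constraint; otherwise the plan is sound, though the detailed case analysis showing the shifted schedule lies in $\Gamma_{N(k+1)}^H(s(k+1))$ still needs to be carried out.
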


	\begin{theorem}[Convergence] \label{thm:convergence}
		Suppose the conditions of Theorem \ref{thm:rec_feas}, Assumpt. \ref{ass:term_decr} and $R\succeq S \succ0$ hold. Then, the nominal state $\bar{x}(k)$ converges to $\mathbb{A}\coloneqq\{0\}\times\{0\}\times\N_{[0,b]}$, the observer state $\hat{x}(k)$ converges to $\mathbb{A}\oplus\Omega$, and the real state $x(k)$ converges to $\mathbb{A}\oplus\Omega\oplus\Psi=(\Omega_p\oplus\Psi_p)\times K_p\Omega_p \times \N_{[0,b]}$ as $k\to\infty$.
	\end{theorem}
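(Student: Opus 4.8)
The plan is to prove convergence of the nominal state first and then propagate it to the observer and real states through the tube bounds of Proposition~\ref{prop:boundedness_error}. For the nominal part I would show that the optimal value $V^*(k)\coloneqq V(\bar{x}^*(\cdot|k),\bar{u}^*(\cdot|k),k)$ is a Lyapunov function for the closed-loop nominal system \eqref{eq:system_nom}. To this end I would build a feasible candidate for $\mathcal{P}$ at time $k+1$ out of the optimizer at time $k$. At steps interior to a cycle, where $N(k+1)=N(k)-1$, the candidate is the one-step shift of $(\bar{x}^*(\cdot|k),\bar{u}^*(\cdot|k))$; its initial state is $\bar{x}^*(1|k)=\bar{x}(k+1)$, which meets \eqref{constr_IC} both when the shifted schedule starts with a transmission (then $\hat{x}(k+1)\in\{\bar{x}^*(1|k)\}\oplus\Omega$ holds because $e(k+1)\in\Omega$ by Proposition~\ref{prop:boundedness_error}) and when it does not. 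At a cycle boundary, where the horizon grows back to $\overline{N}$, I would additionally append the $M$ terminal-controller steps $\kappa_0,\dots,\kappa_{M-1}$; Assumption~\ref{ass:term_inv} together with $H\ge M$ (Lemma~\ref{lem:H_step_closed_loop}) guarantees that this keeps the nominal trajectory inside $\bar{\X}$, respects the tightened input and schedule constraints \eqref{constr_state_input},\eqref{constr_scheduling} and returns to $\bar{\X}_f$, so that the candidate is feasible, exactly as in the recursive-feasibility argument of Theorem~\ref{thm:rec_feas}.

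I would then evaluate $V$ along this candidate and subtract $V^*(k)$. At interior steps the terminal state is unchanged, so only the first stage cost and the $\lambda$-term move, giving $V^*(k+1)-V^*(k)\le \lambda(\bar{x}^*(1|k))-\lambda(\bar{x}^*(0|k))-\ell(\bar{x}^*(0|k),\bar{u}^*(0|k))$; at cycle boundaries the appended stage costs measured against $V_f$ contribute an extra term that Assumption~\ref{ass:term_decr} (the $M$-step Lyapunov inequality for $A_p^M+B_p^M K_{f,p}$) renders non-positive, so the same bound holds. The decisive algebraic observation is that, because $\bar{u}_s(k+1)=(1-\bar{\gamma}(k))\bar{u}_s(k)+\bar{\gamma}(k)\bar{u}_c(k)$ and $\bar{\gamma}(k)\in\{0,1\}$, the applied-input part of $\ell$ equals $\bar{u}_s(k+1)^\top R\,\bar{u}_s(k+1)$ regardless of the transmission decision, whence $\ell(\bar{x}(k),\bar{u}(k))=\bar{x}_p(k)^\top Q\,\bar{x}_p(k)+\bar{u}_s(k+1)^\top R\,\bar{u}_s(k+1)$. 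Substituting this and $\lambda(\bar{x})=\bar{u}_s^\top S\bar{u}_s$, and using $\bar{x}(k)=\bar{x}^*(0|k)$ from Step~2) of Algorithm~\ref{scheme_MPC}, the right-hand side collapses to $-\bar{x}_p(k)^\top Q\,\bar{x}_p(k)-\bar{u}_s(k+1)^\top(R-S)\bar{u}_s(k+1)-\bar{u}_s(k)^\top S\bar{u}_s(k)$. This is precisely where the choice $R\succeq S\succ0$ and the storage-like term $\lambda$, i.e. the economic-MPC ingredients of \cite{Bayer14,Dong18_2}, pay off: $R-S\succeq0$ lets me drop the middle term and conclude $V^*(k+1)-V^*(k)\le-\bar{x}_p(k)^\top Q\,\bar{x}_p(k)-\bar{u}_s(k)^\top S\bar{u}_s(k)\le0$. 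Summing over $k$ and using $V^*\ge0$ gives $\sum_{k}\big(\bar{x}_p(k)^\top Q\,\bar{x}_p(k)+\bar{u}_s(k)^\top S\bar{u}_s(k)\big)<\infty$, so with $Q,S\succ0$ I obtain $\bar{x}_p(k)\to0$ and $\bar{u}_s(k)\to0$; since the bucket-level component of $\bar{x}(k)$ stays in $\N_{[0,b]}$ by \eqref{constr_state_input}, this yields $\bar{x}(k)\to\mathbb{A}$.

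Finally I would transfer this to the other states. Theorem~\ref{thm:rec_feas} makes the scheme recursively feasible, so the hypotheses of Proposition~\ref{prop:boundedness_error} hold at every $k$ and give $\hat{x}(k)\in\{\bar{x}(k)\}\oplus\Omega$ and $x(k)\in\{\hat{x}(k)\}\oplus\Psi$. Letting $\bar{x}(k)\to\mathbb{A}$ then forces $\hat{x}(k)\to\mathbb{A}\oplus\Omega$ and $x(k)\to\mathbb{A}\oplus\Omega\oplus\Psi$; with $\mathbb{A}=\{0\}\times\{0\}\times\N_{[0,b]}$, $\Omega=\Omega_p\times K_p\Omega_p\times\{0\}$ (Lemma~\ref{lem:RPI_set_ZOH}) and $\Psi=\Psi_p\times\{0\}\times\{0\}$, the Minkowski sums evaluate componentwise to the claimed set $(\Omega_p\oplus\Psi_p)\times K_p\Omega_p\times\N_{[0,b]}$. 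I expect the genuine difficulty to lie entirely in the value-decrease step of the middle paragraph: reconciling the time-varying cyclic horizon (which dictates the boundary candidate and ties the terminal ingredients to the $M$-step lift), the mixed initial-condition rule \eqref{constr_IC}, and the non-positive-definiteness of $\ell$ in $\bar{u}_s$, the last of which is only defused once the $\lambda$-term and the input-cost identity are combined with $R\succeq S\succ0$.
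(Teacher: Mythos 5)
Your proposal is correct and follows essentially the same route as the paper: the paper packages your explicit $\lambda$-telescoping as the ``rotated'' cost $L(\bar{x},\bar{u})=\ell(\bar{x},\bar{u})+\lambda(\bar{x})-\lambda(f(\bar{x},\bar{u}))$ from economic MPC, observes that the rotated OCP has the same optimizer as $\mathcal{P}$ (since the telescoped sum reproduces $V$ exactly), and then defers the candidate-based value decrease to the cited proofs of \cite{Wildhagen19,Wildhagen19_2}. Your direct computation on $V^*$ --- in particular the identity that the input part of $\ell$ equals $\bar{u}_s(k+1)^\top R\,\bar{u}_s(k+1)$, which combined with $R\succeq S\succ 0$ makes the rotated stage cost positive definite w.r.t. $\mathbb{A}$ --- is precisely the content of verifying the dissipativity assumption that the paper imports by reference, so you have simply unpacked the same argument.
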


	\begin{remark}
		In case of the local measurement actuator, the guarantees of Theorems \ref{thm:rec_feas} and \ref{thm:convergence} hold even if the schedule constraint \eqref{constr_scheduling} was omitted and there was no condition placed on $H$. This is immediate in view of Lemma \ref{lem:RPI_set_MEAS}.
	\end{remark}
	
	\section{Numerical example} \label{sec:num_ex}
	
	We consider a disturbed double integrator
	\begin{align*}
	x_p(k+1) &= \begin{bmatrix} 1 & 0.1 \\ 0 & 1 \end{bmatrix} x_p(k) + \begin{bmatrix} 0.005 \\ 0.1 \end{bmatrix} u_p(k) + w_p(k) \\
	y_p(k) &= \begin{bmatrix} 1 & 0	\end{bmatrix} x_p(k) + v_p(k)
	\end{align*}
	subject to the constraints $x_p(k)\in[-20,20]^2$, $u_p(k)\in[-20,20]$. The bounded disturbance and measurement noise fulfill $w_p(k)\in[-0.002,0.002]^2$ and $v_p(k)\in[-0.001,0.001]$. We presume that the network prescribes a bandwidth limit of $\frac{1}{3}$, which can be captured by a token bucket TS with the parameters $g=1$, $c=3$ and $b=10$ such that $\frac{g}{c}=\frac{1}{3}$. As a result, the base period is $M=3$ and $H\ge M=3$ must be satisfied. The numerical results were obtained using Matlab, Yalmip \cite{YALMIP}, SDPT3 \cite{SDTP3} and MPT3 \cite{MPT3}.
	
	As mentioned in Section \ref{sec:bounding}, the construction of $[1,H]$ RCI sets is possible via \cite{Kouramas05},\cite{Pluymers05}. In this example, we obtained better results with the algorithm stated below. A proof that the resulting $\Omega_p$ fulfills \eqref{eq:inv_ZOH} or \eqref{eq:inv_PRED} is straightforward.
	\begin{enumerate}
		\item Find a contractive set $\mathbb{E}\subseteq\R^{n_p}$ via $\min_{\lambda_i\in(0,1]} \lambda_i$ s.t. $A_{K}^i\mathbb{E}\subseteq\lambda_i\mathbb{E}$ for all $i\in\N_{[1,H]}$, where $A_{K}^i\coloneqq A_p^i+B_p^iK$ for the ZOH actuator or $A_{K}^i\coloneqq (A_p+B_pK)^i$ for the prediction-based actuator, e.g., via \cite[Algorithm 2]{Pluymers05}.
		\item Solve $\min_{\delta_i\in[0,\infty)} \delta_i$ s.t. $\sum_{j=0}^{i-1} A_p^jL_p(C_p\Psi_p\oplus\V_p)\subseteq\delta_i\mathbb{E}$ for all $i\in\N_{[1,H]}$.
		\item Compute $\rho \coloneqq \max_{i\in\N_{[1,H]}} \frac{\delta_i}{1-\lambda_i}$ and define $\Omega_p\coloneqq\rho\mathbb{E}$.
	\end{enumerate}

	\begin{table}
		\centering
		\caption{Area of $[1,H]$ RCI set $\Omega_p$ for different types of actuator and different maximum inter-transmission intervals $H$ (area of $\Omega_p+\Psi_p$ in brackets).}
		\begin{tabular}{c|ccc}
			$H$ & ZOH & prediction-based & local measurement \\ \hline
			3 & 12.0 (12.9) & 6.7 (7.3) & 2.1 (2.5) \\
			4 & 22.4 (23.6) & 13.2 (14.1) & 2.1 (2.5) \\
			5 & 35.2 (36.6) & 22.7 (23.9) & 2.1 (2.5) \\
			6 & 71.1 (73.5) & 36.1 (37.6) & 2.1 (2.5) 
		\end{tabular}
	\label{tab:RPI_sets_H}
	\end{table}
	
	Table \ref{tab:RPI_sets_H} lists the areas of $\Omega_p$ for all three types of actuator (ZOH, prediction-based and local measurement) and for $H=3,4,5,6$. First, we study the effect of the maximum allowed inter-transmission interval $H$ on the size of the $[1,H]$ RCI set $\Omega$. It is apparent that the RCI set grows considerably with the open-loop phase $H$ for the ZOH and prediction-based actuator. In this example for the ZOH actuator, the area of $\Omega_p$ is 5.9 times larger for $H=6$ compared to $H=3$. This shows again the tradeoff that was discussed in Remark \ref{rem:size_H}: if we choose $H=3$, which is the same as the cycle length, then the schedule constraint \eqref{constr_scheduling} will enforce an almost periodic transmission pattern in closed loop. If one wants more flexibility for scheduling and thus increases $H$, one pays with a larger $[1,H]$ RCI set, which leads to tighter constraints in the prediction and in turn to a smaller feasible set. In contrast, for the local measurement actuator, the size of the RCI set is independent of $H$. This demonstrates that if the actuator has access to plant measurements in order to apply the error feedback, this tradeoff vanishes.
	
	Second, we compare the area of the $[1,H]$ RCI set for the different actuators. For any $H$, the set $\Omega_p$ is largest for the ZOH actuator, while it is about half the size for the prediction-based actuator. For the local measurement actuator, it is always much smaller than in the previous cases. As can be seen from Theorem \ref{thm:convergence} and as discussed in Remark \ref{rem:size_H}, a smaller $[1,H]$ RCI set brings two decisive advantages: a smaller guaranteed region of convergence and a larger feasible set. Thus, we note that the higher requirements on the hardware posed by the prediction-based and the local measurement actuator are rewarded by important practical benefits.
	
	Table \ref{tab:RPI_sets_H} also lists the area of $\Omega_p+\Psi_p$. As $\Psi_p$ has a area of only $0.004$ and is independent of $H$ and the actuator, $\Omega_p$ contributes the main share of area for all tested cases.
	
	\begin{figure}
		\centering
		\input{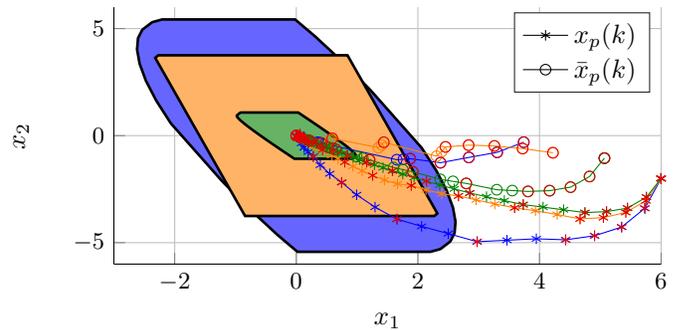} \vspace{-15pt}
		\caption{Trajectories of closed-loop system and $\Omega_p+\Psi_p$ for ZOH (blue), prediction-based (orange) and local measurement actuator (green). Transmission time points are marked in red.}
		\label{fig:plant_states}
	\end{figure}
	
	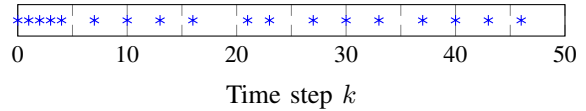
\begin{figure}
		\centering
%
%
\begin{tikzpicture}

\begin{axis}[%
width=\columnwidth,
height=2cm,
xmin=0,
xmax=50,
ytick = \empty,
xtick = {0,5,10,15,20,25,30,35,40,45,50},
xticklabels = {0,,10,,20,,30,,40,,50},
ymin=-1,
ymax=1,
axis background/.style={fill=white},
xminorticks = true,
grid = minor,
xminorgrids = true,
xlabel = {Time step $k$},
]
\addplot [only marks, mark=asterisk, mark options={solid, istblue}, forget plot]
  table[row sep=crcr]{%
	0	0\\
	1	0\\
	2	0\\
	3	0\\
	4	0\\
	7	0\\
	10	0\\
	13	0\\
	16	0\\
	21	0\\
	23	0\\
	27	0\\
	30	0\\
	33	0\\
	37	0\\
	40	0\\
	43	0\\
	46	0\\
};
\end{axis}
\end{tikzpicture}
		\caption{Transmission instants for ZOH actuator.}
		\label{fig:transmissions}
	\end{figure}

	A closed-loop simulation for all three actuators was conducted with a maximum prediction horizon of $\overline{N}=6$, $H=5$, cost matrices $Q=10I$, $R=1$, $S=10^{-6}$, and initial conditions $x_p(0)=[6\;-2]^\top$, $u_s(0)=0$ and $\beta(0)=b=10$, and extremal values for the disturbance and measurement noise. A suitable terminal region and cost were constructed as discussed in Remark \ref{rem_construction terminal_ingredients}. Figure \ref{fig:plant_states} shows the closed-loop trajectories of the plant and $\Omega_p+\Psi_p$ for all three actuators. One can clearly see that the true plant state converges to $\Omega_p+\Psi_p$ in all cases. Actually, the region of ultimate convergence is much smaller, which shows that the theoretical statement of Theorem \ref{thm:convergence} can be rather conservative. Figure \ref{fig:transmissions} displays at which time points a transmission takes place for the ZOH actuator. Due to the schedule constraint, the inter-transmission interval is indeed never longer than $H=5$. For all three types of actuator, $18$ transmissions were triggered in the time interval $k\in[0,50]$.
	
	\section{Summary and Outlook} \label{sec:summary}
	
	In this article, we designed a rollout ETC which is able to handle bounded uncertainties and output feedback. The implicit triggering and control law defined by the solution to an OCP made it possible to guarantee satisfaction of bandwidth constraints, which is a hard problem in classical ETC.
	
	Since the idea of rollout ETC is rather new, there are many important open issues: Transmission scheduling over both the controller-actuator and the sensor-controller channel, and distributed controllers/schedulers would allow for more flexible NCS architectures. Being able to handle delays, dropouts and quantization, which were not considered in this work for simplicity, would make rollout ETC even more suitable to address the special needs of NCSs. Lastly, generalizing the concepts for robustness, e.g., by investigating the $\ell_p$ gain, input-to-state stability or robustness for nonlinear systems, could be an interesting direction in rollout ETC.
	
	\section*{Acknowledgment}
	
	The authors thank Florian David Brunner for helpful discussions on the algorithm for constructing the $[1,H]$ RCI set.
	
	\bibliographystyle{IEEEtran}
	\bibliography{bib_Rollout}
	
	\appendix
	\section{Appendix}
	
	\subsection{Proof of Lemma \ref{lem:RPI_set_ZOH}} \label{app:lem_RPI_set_ZOH}
	
	We prove first that if for $i\in\N_{[0,H-1]}$ it holds that $e(i)\in\R^{n_p+m_p}\times\{0\}$, then with $\small\mathcal{A}_{\bar{\gamma}} \coloneqq {\footnotesize\begin{bmatrix}	A_p+\bar{\gamma}B_pK_p & (1-\bar{\gamma})B_p & 0 \\
		\bar{\gamma}K_p & (1-\bar{\gamma})I & 0 \\
		0 & 0 & 0	\end{bmatrix}}$
	\begin{equation}
	e(i+1)=
	\begin{cases}
	\mathcal{A}_{1}e(0)+\delta(0) & i=0 \\
	\mathcal{A}_{0}e(i)+\delta(i) & i\in\N_{[1,H]}
	\end{cases}. \label{eq:error_dyn}
	\end{equation}
	
	Consider first $i=0$, for which we have $e(1) = f(\hat{x}(0),\left[\substack{\bar{\nu}_c(0)+K_p(\hat{x}_p(0)-\bar{x}_p(0))\\ 0 \\ 1}\right])+\delta(0)-f(\bar{x}(0),\left[\substack{\bar{\nu}_c(0) \\ 0 \\ 1}\right])$ $ \stackrel{\hat{\beta}(0)=\bar{\beta}(0)}{=}\left[\substack{	A_p(\hat{x}_p(0)-\bar{x}_p(0))+B_pK_p(\hat{x}_p(0)-\bar{x}_p(0)) \\
	K_p(\hat{x}_p(0)-\bar{x}_p(0))  \\
	0	}\right]
	+\delta(0)$.
	
	For $i\in\N_{[1,H-1]}$, we have again $\hat{\beta}(i)-\bar{\beta}(i) = 0$ and hence
	$e(i+1) = f(\hat{x}(i),\left[\substack{0\\ 0 \\0}\right])+\delta(i)-f(\bar{x}(i),\left[\substack{0\\ 0\\0}\right])=\left[\substack{	A_p(\hat{x}_p(i)-\bar{x}_p(i))+B_p(\hat{u}_s(i)-\bar{u}_s(i)) \\
	\hat{u}_s(i)-\bar{u}_s(i)  \\
	0	}\right] + \delta(i)$.
	
	Second, we prove by induction that if $e(0)\in\Omega$, then $e(i)= \mathcal{A}_0^{i-1} \mathcal{A}_1 e(0)+ \sum_{j=0}^{i-1} \mathcal{A}_0^j \delta(i-j-1)$ and $e(i)\in\Omega$ for all $i\in\N_{[1,H]}$, such that the claim follows directly. We begin the induction for $i=1$. Defining $e_p(i)\coloneqq\begin{bmatrix} I & 0 & 0 \end{bmatrix}e(i)$ for all $i\in\N_{[0,H]}$, we have with $e(0)\in\Omega\subseteq\R^{n_p+m_p}\times\{0\}$ and \eqref{eq:error_dyn} $e(1) = \mathcal{A}_1 e(0) + \delta(0) =
	\left[\substack{	(A_p+B_pK_p) e_p(0) \\ K_p e_p(0) \\ 0	}\right] + \delta(0) \stackrel[\delta(0)\in\Delta]{e_p(0)\in\Omega_p}{\in} (A_p+B_pK_p)\Omega_p \oplus L_p(C_p\Psi_p\oplus\V_p) \times K_p\Omega_p \times \{0\} \stackrel{\eqref{eq:inv_ZOH}}{\subseteq} \Omega$.
	
	Next, we consider the induction step. With \eqref{eq:error_dyn} and the induction hypothesis, we immediately have the result since $e(i+1) = \mathcal{A}_0 e(i) + \delta(i) = \mathcal{A}_0 \big(\mathcal{A}_0^{i-1} \mathcal{A}_1 e(0)+ \sum_{j=0}^{i-1} \mathcal{A}_0^j \delta(i-j-1)\big) + \delta(i)$. Further, using the definition of $B_p^i$, $e(i+1) = \left[\substack{	(A_p^{i+1}+B_p^{i+1} K_p) e_p(0) \\ K_p e_p(0) \\ 0	}\right] \hspace{-2pt}+\hspace{-2pt} \sum_{j=0}^{i} \mathcal{A}_0^j \delta(i-j) \stackrel[\delta(i-j)\in\Delta]{e_p(0)\in\Omega_p}{\in} (A_p^{i+1}\hspace{-1pt}+\hspace{-1pt}B_p^{i+1}$ $K_p)\Omega_p \oplus \big(\bigoplus_{j=0}^{i} A_p^j L_p (C_p\Psi_p\oplus\V_p)\big) \times K_p\Omega_p \times \{0\} \stackrel{\eqref{eq:inv_ZOH}}{\subseteq} \Omega$.

	\subsection{Proof of Lemma \ref{lem:RPI_set_ZOH_find_K}} \label{app:RPI_set_ZOH_find_K}
	
	We interpret $\{A_p^i+B_p^i K_p| i\in\N_{[1,H]}\}$ as the set of matrices defining a convex hull in which the system matrix of a linear difference inclusion is guaranteed to lie (see \cite{Kouramas05}).
	
	First, we note that $\bigoplus_{j=0}^{H-1} A_p^j L_p (C_p\Psi_p\oplus\V_p)$ is a $\mathcal{C}$ set, since matrix multiplication and Minkowski addition preserve convexity and compactness of $\Psi_p$ and $\V_p$. Hence, \cite[Assumption 1]{Kouramas05} is fulfilled.
	
	Second, we apply the Schur complement to \eqref{LMI_feedback_gain}, substitute $X=P^{-1}$ and $Y=K_pP^{-1}$, pre- and postmultiply $P$ and lastly, subtract $P$ on both sides to obtain $(A_p^i+B_p^i K_p)^\top P (A_p^i+B_p^i K_p) - P \preceq -(1-\lambda)P, \; i\in\N_{[1,H]}$,
	which implies \cite[Assumption 2]{Kouramas05}.
	
	Finally, we apply \cite[Theorem 1]{Kouramas05} to conclude that there exists a set $\Omega_p$ which fulfills $(A_p^i+B_p^i K_p)\Omega_p \oplus \bigg(\bigoplus_{j=0}^{H-1} A_p^j L_p (C_p\Psi_p\oplus\V_p)\bigg) \subseteq \Omega_p$
	for all $i\in\N_{[1,H]}$. The claim follows as $\bigoplus_{j=0}^{i-1} A_p^j L_p (C_p\Psi_p\oplus\V_p)\subseteq \bigoplus_{j=0}^{H-1} A_p^j L_p (C_p\Psi_p\oplus\V_p)$ for all $i\in\N_{[1,H]}$.
	
	\subsection{Proof of Lemma \ref{lem:RPI_set_PRED}} \label{app:RPI_set_PRED}
	
	Let us define $e_1(i)\coloneqq\hat{x}(i)-\tilde{x}(i)$ for all $i\in\N_{[1,H]}$. We will first prove that whenever $\hat{x}(0)=\tilde{x}(0)$, it holds that $e_1(i)\in\bigoplus_{j=0}^{i-1} A_p^j L_p (C_p\Psi_p\oplus\V_p)\times\{0\}\times\{0\}$ for all $i\in\N_{[1,H]}$.
	
	We start the induction for $i=1$. We have $e_1(1) = f(\hat{x}(0),\left[\substack{\bar{\nu}_c(0) \\ K_p(\tilde{x}_p(0)-\bar{x}_p(0)) \\ 1}\right]) + \delta(0)	- f(\tilde{x}(0),\left[\substack{\bar{\nu}_c(0) \\ K_p(\tilde{x}_p(0)-\bar{x}_p(0)) \\ 1}\right])$ $= \delta(0) \in \Delta = L_p (C_p\Psi_p\oplus\V_p)\times\{0\}\times\{0\}$.
	
	For the induction step, for any $i\in\N_{[1,H-1]}$ we have
	$e_1(i+1) = f(\hat{x}(i),\left[\substack{0 \\ K_p(\tilde{x}_p(i)-\bar{x}_p(i)) \\ 0}\right]) + \delta(i) - f(\tilde{x}(i),\left[\substack{0 \\ K_p(\tilde{x}_p(i)-\bar{x}_p(i)) \\ 0}\right]) \stackrel[\hat{\beta}(i)=\tilde{\beta}(i)]{\hat{u}_s(i)=\tilde{u}_s(i)}{=}
	\left[\substack{ A_p (\hat{x}_p(i)-\tilde{x}_p(i)) \\ 0 \\ 0 }\right] + \delta(i) \in \bigoplus_{j=1}^{i} A_p^j L_p (C_p\Psi_p\oplus\V_p)\oplus L_p (C_p\Psi_p\oplus\V_p)\times\{0\}\times\{0\}$
	where the last two steps are from the induction hypothesis.
	
	Further, we define $e_2(i) \coloneqq \tilde{x}(i) - \bar{x}(i)$. We prove that if $\hat{x}(0) = \tilde{x}(0)$, $\hat{x}(0)-\bar{x}(0)\in\Omega$, it holds that $e_2(i) = \text{diag}\{(A_p+B_pK_p)^i,0,0\}e_2(0)$ and $e_2(i)\in (A_p+B_pK_p)^i\Omega_p\times\{0\}\times\{0\}$ for all $i\in\N_{[1,H]}$. We start the induction with $e_2(1) = f(\tilde{x}(0),\left[\substack{\bar{\nu}_c(0) \\ K_p(\tilde{x}_p(0)-\bar{x}_p(0)) \\ 1}\right]) - f(\bar{x}(0),\left[\substack{\bar{\nu}_c(0) \\ 0 \\ 1}\right]) \stackrel[\tilde{\beta}(i)=\bar{\beta}(i)]{\tilde{u}_s(i)=\bar{u}_s(i)}{=}
	\left[\substack{ (A_p+B_pK_p) (\tilde{x}_p(0)-\bar{x}_p(0)) \\ 0 \\ 0 }\right] \in (A_p+B_pK_p)\Omega_p \times\{0\}\times\{0\}$.
	
	For the induction step, for any $i\in\N_{[1,H-1]}$ we compute
	$e_2(i+1) = f(\tilde{x}(i),\left[\substack{0 \\ K_p(\tilde{x}_p(i)-\bar{x}_p(i)) \\ 0}\right]) - f(\bar{x}(i),\left[\substack{0 \\ 0 \\ 0}\right]) \stackrel[\tilde{\beta}(i)=\bar{\beta}(i)]{\tilde{u}_s(i)=\bar{u}_s(i)}{=}
	\left[\substack{ (A_p+B_pK_p) (\tilde{x}_p(i)-\bar{x}_p(i)) \\ 0 \\ 0 }\right] \in (A_p+B_pK_p)^{i+1}\Omega_p \times\{0\}\times\{0\}$,
	where we used the induction hypothesis for the last two steps.
	
	Finally, since $e(i)\hspace{-1pt}=\hspace{-1pt}e_1(i) \hspace{-0.5pt}+\hspace{-0.5pt} e_2(i)$ for all $i\in\N_{[1,H]}$, we have $e(i) \in (A_p+B_p K_p)^i\Omega_p \oplus \bigg(\bigoplus_{j=0}^{i-1} A_p^i L_p (C_p\Psi_p\oplus\V_p)\bigg) \\ 
	\times\{0\}\times\{0\}\stackrel{\eqref{eq:inv_PRED}}{\subseteq} \Omega_p\times\{0\}\times\{0\} = \Omega$.
	
	\subsection{Proof of Lemma \ref{lem:RPI_set_MEAS}} \label{app:RPI_set_MEAS}
	
	Suppose $e(i)\in\Omega$ for an $i\in\N_0$. In case $i=0$, it holds that $e(1) = f(\hat{x}(0),\left[\substack{\bar{\nu}_c(0)\\ K_p(\hat{x}_p(0)-\bar{x}_p(0)) \\ 1}\right]+\delta(0)-f(\bar{x}(0),\left[\substack{\bar{\nu}_c(0) \\ 0 \\ 1}\right]) \stackrel[\hat{\beta}(i)=\bar{\beta}(i)]{\hat{u}_s(i)=\bar{u}_s(i)}{=}
	\text{diag}\{A_p+B_pK_p,0,0\} e(0) + \delta(0)$.
	
	If $i\in\N$, we have $e(i+1) = f(\hat{x}(i),\left[\substack{0\\ K_p(\hat{x}_p(i)-\bar{x}_p(i)) \\ 0}\right])+\delta(i)-f(\bar{x}(i),\left[\substack{0 \\ 0 \\ 0}\right]) \stackrel[\hat{\beta}(i)=\bar{\beta}(i)]{\hat{u}_s(i)=\bar{u}_s(i)}{=}
	\text{diag}\{A_p\hspace{-1pt}+\hspace{-1pt}B_pK_p,0,0\} e(i) + \delta(i)$.
	
	In both cases, we directly obtain $e(i\hspace{-1pt}+\hspace{-1pt}1) \in (A_p+B_p$ $K_p)\Omega_p \oplus L_p (C_p\Psi_p \oplus \V_p)\times\{0\}\times\{0\} \stackrel{\eqref{eq:inv_MEAS}}{\subseteq} \Omega_p \times\{0\}\times\{0\} = \Omega$.
	
	\subsection{Proof of Lemma \ref{lem:H_step_closed_loop}} \label{app:Q_step_closed_loop}
	
	Suppose, for contradiction, that for an arbitrary $k\in\N_{[0,\overline{k}-H]}$, $\bar{\gamma}(k)=1$ and $\bar{\gamma}(k+i)=0$ for all $i\in\N_{[1,H]}$ under application of Algorithm \ref{scheme_MPC}, i.e., that the inter-transmission interval is longer than $H$ although the predicted schedules always fulfill the constraint \eqref{constr_scheduling}.
	
	In this case, $s(k+i) = i-1$, $i\in\N_{[1,H]}$. The interval until the next time instant at which the full horizon $\overline{N}$ is used, is denoted by $h$, i.e., $N(k+h)=\overline{N}$. According to the cyclic horizon scheme, $h\in\N_{[1,M]}\subseteq \N_{[1,H]}$, where the inclusion is from $H\ge M$. Hence, $\bar{\gamma}^*(l|k+h)=1$ for some $l \in\N_{[0,H-s(k+h)-1]}=\N_{[0,H-h]}$ since $\bar{\gamma}^*(\cdot|k+h)\in\Gamma_{\overline{N}}^H(h-1)$. Note that since $\overline{N}\ge H$, it cannot happen that $N(k+h)=\overline{N}\le H-s(k+h)-1$ due to $s(k+h)=h-1\ge 0$ and in effect, $\bar{\gamma}^*(\cdot|k+h)$ must contain a transmission (cf. \eqref{def_scheduling_constraint}).
	
	Due to the cyclic horizon, $N(k+i) \ge \overline{N}-i+h$ for all $i\in\N_{[h,H]}$. Equality holds if the horizon just decreases in the remaining time while the strict inequality comes into effect if the horizon recovers to $\overline{N}$ at some point. Hence,
	\begin{equation*}
	N(k+i)\ge \overline{N}-i+h \hspace*{-2pt}\stackrel{\overline{N}\ge H}{\ge}\hspace*{-2pt} H-i+h \hspace*{-2pt}\stackrel{h\ge 1}{\ge}\hspace*{-2pt} H-i+1=H-s(k+i).
	\end{equation*}
	In consequence, since $\bar{\gamma}^*(\cdot|k+i)\in\Gamma_{N(k+i)}^H(i-1)$, it must hold that $\bar{\gamma}^*(l|k+i)=1$ for some $l\in\N_{[0,H-s(k+i)-1]}=\N_{[0,H-i]}$ for all $i\in\N_{[h,H]}$. Finally, it follows for some $i\in\N_{[h,H]}$ that $\bar{\gamma}^*(0|k+i)=1$ and then according to Algorithm \ref{scheme_MPC}, $\bar{\gamma}(k+i)=1$. This contradicts the premise, such that the statement holds.

	\subsection{Proof of Theorem \ref{thm:rec_feas}} \label{app:rec_feas}
	
	Suppose $\mathcal{P}$ was feasible at all instants up to time $k\in\N_{0}$. Consider the candidate initial condition at $k+1$
	\begin{equation}
	\bar{x}(0|k+1) = \bar{x}^*(1|k) = f(\bar{x}(k),\bar{u}(k)). \label{feas_IC}
	\end{equation}
	Define $t\coloneqq k+1-s(k+1)-1$. According to Algorithm \ref{scheme_MPC}, it holds that $\tilde{x}(t)=\hat{x}(t)$, $\hat{x}(t)-\bar{x}(t)\in\Omega$ and $u(t)=\phi'(\bar{u}(t),\hat{x}(t),\tilde{x}(t),\bar{x}(t))$. Furthermore, it holds that $u(t+i)=\phi''(\bar{u}(t+i),\hat{x}(t+i),\tilde{x}(t+i),\bar{x}(t+i))$, $i\in\N_{[1,s(k+1)]}$. Recall that Lemma \ref{lem:H_step_closed_loop} guarantees that there is a transmission at least every $H$ time steps, i.e., that $s(k+1)\le H-1$. Furthermore, Assumption \ref{ass:RPI_sets} ensures that $L_p,\Psi_p,K_p,\Omega_p,H$, as demanded in the prerequisites of Proposition \ref{prop:boundedness_error}, exist. Combining these arguments, we note that all conditions of Proposition \ref{prop:boundedness_error} are fulfilled with $t = k+1-s(k+1)-1$ and $h\coloneqq s(k+1)+1\in\N_{[1,H]}$, which ensures $\hat{x}(k+1)\in\{\bar{x}(0|k+1)\}\oplus\Omega$. In conclusion, the constraint \eqref{constr_IC} is fulfilled by \eqref{feas_IC}.
	
	Now consider the case $(k+1)\text{mod} M\neq 0$ where $N(k+1)=N(k)-1$, for which the candidate input sequence is
	\begin{equation}
	\bar{u}(\cdot | k + 1)=\{\bar{u}^*(1|k),\ldots,\bar{u}^*(N(k)-1|k)\}, \label{feas_input_1}
	\end{equation}
	and the case $(k+1)\text{mod} M=0$, where $N(k+1)=\overline{N}=N(k)+M-1$ and for which the candidate input is
	\begin{align}
	&\bar{u}(\cdot | k + 1)=\{\bar{u}^*(1|k),\ldots,\bar{u}^*(N(k)-1|k), \label{feas_input_2} \\
	&\kappa_{0}(\bar{x}(\overline{N}-M-1|k+1)),\ldots,\kappa_{M-1}(\bar{x}(\overline{N}-1|k+1))\}. \nonumber
	\end{align}
	We refer to the proof of \cite[Theorem 1]{Wildhagen19_2} and \cite{Koegel13} to conclude that $\bar{u}(\cdot | k + 1)$ fulfills \eqref{constr_dynamics}, \eqref{constr_state_input} and \eqref{constr_terminal} in both cases.
	
	Next, we focus on the schedule constraint \eqref{constr_scheduling}. We will show only that the candidate sequences \eqref{feas_input_1} and \eqref{feas_input_2} fulfill the schedule constraint of the ``first'' and ``last'' transmission in \eqref{def_scheduling_constraint}, since it is obvious that the ``middle transmissions'' are just shifted in \eqref{feas_input_1} and \eqref{feas_input_2} compared to $\bar{\gamma}^*(\cdot|k)$, and are still no more than $H$ time steps apart. Consider 1) $N(k)\ge H-s(k)$. Then, $\bar{\gamma}^*(i|k)=1$ for an $i\in\N_{[0,H-s(k)-1]}$ and $\bar{\gamma}^*(j|k)=1$ for a $j\in\N_{[N(k)-H,N(k)-1]}$. In case $\bar{\gamma}^*(0|k)=0$, both candidate input sequences \eqref{feas_input_1} and \eqref{feas_input_2} guarantee that $\bar{\gamma}(i|k+1)=1$ for an $i\in\N_{[0,H-s(k)-2]}=\N_{[0,H-s(k+1)-1]}$ since then, $s(k+1)=s(k)+1$ and $s(k)\ge 0$ and the previously optimal schedule was just shifted in the candidate schedules. In case where $\bar{\gamma}^*(0|k)=1$, we can conclude the same but forgo a derivation due to space constraints. Consider 1a) that  $(k+1)\text{mod}M\neq 0$. With \eqref{feas_input_1}, if $\tau_{\bar{\gamma}^*(\cdot|k)}(n_{\bar{\gamma}^*(\cdot|k)})\ge 1$, then $\bar{\gamma}(j|k+1)=1$ for a $j\in\N_{[N(k)-H-1,N(k)-2]}=\N_{[N(k+1)-H,N(k+1)-1]}$ due to $N(k+1)=N(k)-1$. If $\tau_{\bar{\gamma}^*(\cdot|k)}(n_{\bar{\gamma}^*(\cdot|k)})=0$, one may conclude that $N(k+1)\le H-s(k+1)-1$, such that it must not contain a transmission. In conclusion, $\bar{\gamma}(\cdot|k+1)\in\Gamma_{N(k)-1}^H(s(k+1))$. In case 1b) $(k+1)\text{mod}M=0$, we have $N(k+1)=\overline{N}=N(k)+M-1$ and $\bar{\gamma}(\overline{N}-M|k+1)=1$ due to the scheduled transmission in $\kappa_0$. Since $H\ge M$, we have again $\bar{\gamma}(\cdot|k+1)\in\Gamma_{N(k)+M-1}^H(s(k+1))$. Consider 2) that $N(k)\le H-s(k)-1$, such that $\bar{\gamma}^*(\cdot|k)$ might not contain a transmission at all. If 2a) $(k+1)\text{mod}M\neq 0$, then $N(k+1)=N(k)-1$, such that $N(k+1)=N(k)-1\le H-s(k)-2 \le H-s(k+1)-1$ since $s(k+1)\in\{0,s(k)+1\}$ and $s(k)\ge 0$. In effect, $\bar{\gamma}(\cdot|k+1)$ is not required to contain a transmission and hence, $\bar{\gamma}(\cdot|k+1)\in\Gamma_{N(k)-1}^H(s(k+1))$. If 2b) $(k+1)\text{mod}M = 0$, then $\bar{\gamma}(\overline{N}-M|k+1)=1$ due to $\kappa_0$. It holds on the one hand that $\bar{\gamma}(i|k+1)=1$ for an $i\in\{\overline{N}-M\}=\{N(k)-1\}\stackrel{N(k)\ge 1}{\subseteq}\N_{[0,N(k)-1]}\stackrel{N(k)\le H-s(k)-1}{\subseteq}\N_{[0,H-s(k-2)]}\subseteq\N_{[0,H-s(k+1)-1]}$, where the last inclusion is again from $s(k+1)\in\{0,s(k)+1\}$. On the other hand, $\bar{\gamma}(j|k+1)=1$ for a $j\in\{\overline{N}-M\}\subseteq\N_{[\overline{N}-H,\overline{N}-1]}$ since $H\ge M$. In conclusion, we have $\bar{\gamma}(\cdot|k+1)\in\Gamma_{N(k)+M-1}^H(s(k+1))$ for this last case as well.
	
	So far, we have proven feasibility of $\mathcal{P}(x(k),\bar{x}(k),s(k),k)$ and that $\hat{x}(k)\in\{\bar{x}(k)\}\oplus\Omega$, $x(k)\in\{\bar{x}(k)\}\oplus\Omega\oplus\Psi$ for all $k\in\N_{0}$. Finally, we elaborate on robust constraint satisfaction: First, let us investigate the input constraint $u(k)\in\U\coloneqq\{u_c,u_e\in\R^m|u_c+u_e\in\U_p\} \times \{0,1\}$. It is clear that $\gamma(k)\in\{0,1\}$, such that it remains to show that $u_c(k)+u_e(k)\in\U_p$. In case of the ZOH actuator, it holds that $u_c(k) = \bar{u}_c(k) + K_p(\hat{x}_p(k)-\bar{x}_p(k))$ and $u_e(k)=0$ whenever $\bar{\gamma}(k)=1$. It is immediate that $u_c(k)+u_e(k)\in\bar{\U} \oplus K_p\Omega_p = \U_p \ominus K_p\Omega_p \oplus K_p\Omega_p\subseteq\U_p$ from Assumption \ref{ass:RPI_sets}, such that the original constraint is fulfilled. If $\bar{\gamma}(k)=0$, $u_c(k)=u_e(k)=0$, such that the input constraints are trivially fulfilled. In case of the prediction-based actuator, we have $u_c(k) = \bar{u}_c(k)$ and $u_e(k) = K_p(\tilde{x}_p(k)-\bar{x}_p(k))$ whenever $\bar{\gamma}(k)=1$. From the proof of Lemma \ref{lem:RPI_set_PRED} and $\hat{x}(k)\in\bar{x}(k)\oplus\Omega$ for all $k\in\N_{0}$, it follows directly that $\tilde{x}_p(k)-\bar{x}_p(k)\in\Omega_p$ for all $k\in\N_{0}$, such that we finally conclude $u_c(k)+u_e(k)\in\U_p$. If $\bar{\gamma}(k)=0$, then $u_c(k)=0$ and $u_e(k)=K_p(\tilde{x}_p(k)-\bar{x}_p(k))$ such that $u_c(k)+u_e(k)\in\Omega_p$ is fulfilled due to Assumption \ref{ass:RPI_sets}. In case of the local measurement actuator, we can argue analogously. Second, it remains to show that the original state constraints are satisfied, which follows directly from the fact that $\bar{x}(i)\in\bar{\X}$, the definition of the tightened constraint sets \eqref{eq:tightened_constraints} and Assumption \ref{ass:RPI_sets}.
	
	\subsection{Proof of Theorem \ref{thm:convergence}} \label{app:convergence}
	
	We analyze convergence with the help of so-called rotated cost functions known from economic MPC \cite{Bayer14,Dong18_2}. We refer first to the proof of \cite[Theorem 1]{Wildhagen19_2} to conclude that \cite[Assumption 1]{Wildhagen19} is fulfilled for the nominal system \eqref{eq:system_nom} with the control invariant set $\mathbb{A}$, $\ell_{av}^*=0$ and any storage $\lambda(\bar{x})=\bar{u}_s^\top S \bar{u}_s$ for which $S$ fulfills $R\succeq S\succ0$. It is also proven therein that \cite[Assumptions 2-4]{Wildhagen19} are fulfilled as well. Define $L(\bar{x},\bar{u}) = \ell(\bar{x},\bar{u}) + \lambda(\bar{x})-\lambda(f(\bar{x},\bar{u}))-\ell_{av}^*$, $\tilde{V}_f(\bar{x})=V_f(\bar{x})+\lambda(\bar{x})$, the rotated objective function
	\begin{equation*}
	\tilde{V}(\bar{x}(\cdot|k),\bar{u}(\cdot|k),k) \hspace{-2pt} \coloneqq \hspace{-8pt} \sum_{i=0}^{N(k)-1} \hspace{-8pt} L(\bar{x}(i|k),\bar{u}(i|k)) + \tilde{V}_f(\bar{x}(N(k)|k))
	\end{equation*}
	and the rotated optimization problem $\tilde{\mathcal{P}}(\hat{x}(k),\bar{x}(k),s(k),k)$:
	\begin{equation*}
	\min_{\bar{x}(\cdot|k),\bar{u}(\cdot|k)} \tilde{V}(\bar{x}(\cdot|k),\bar{u}(\cdot|k),k) \text{ s.t. } \eqref{constr_IC}-\eqref{constr_terminal}.
	\end{equation*}
	
	The crucial observation is now that since $\lambda$ is a storage of the nominal system, $\mathcal{P}$ and $\tilde{\mathcal{P}}$ have the same optimizer (cf. \cite{Dong18_2}). In effect, we may prove convergence of the nominal state to $\mathbb{A}$ using the value function of $\tilde{\mathcal{P}}$, which goes along the same lines as the proof of \cite[Theorem 1]{Wildhagen19}.
	
	The theorem's second and third statements follow immediately with $\hat{x}(k)\in\{\bar{x}(k)\}\oplus\Omega$ and $x(k)\in\{\bar{x}(k)\}\oplus\Omega\oplus\Psi$ for all $k\in\N_{0}$ as established in the proof of Theorem \ref{thm:rec_feas}.
	
	\begin{IEEEbiography}[{\includegraphics[width=1in,height=1.25in,clip,keepaspectratio]{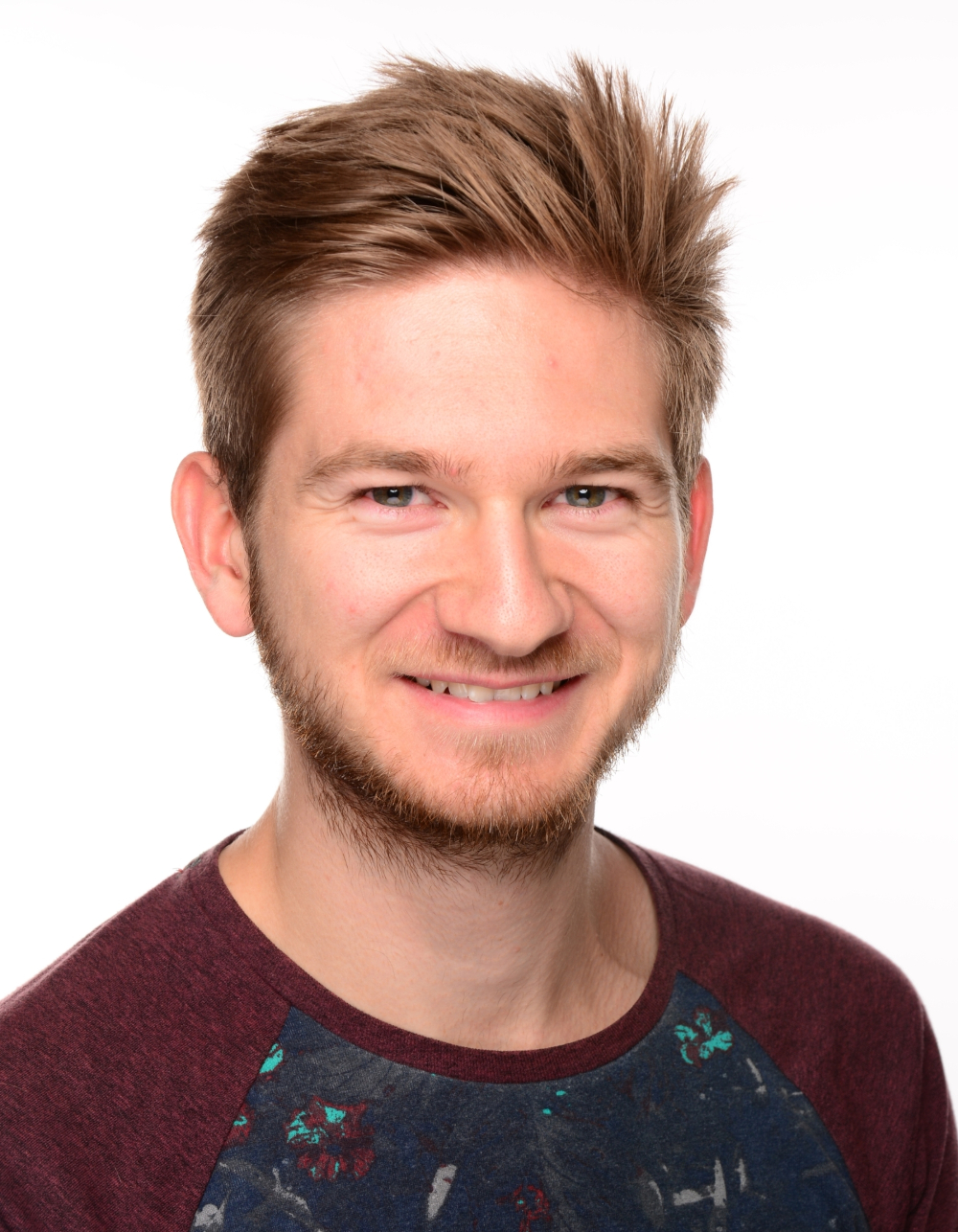}}]{Stefan Wildhagen} received the Master’s degree in Engineering Cybernetics from the University of Stuttgart, Germany, in 2018. He has since been a doctoral student at the Institute for Systems	Theory and Automatic Control under supervision of Prof. Allg\"ower and a member of the Graduate School Simulation Technology at the University of Stuttgart. His research interests are in the area of Networked Control Systems, with a focus on optimization-based scheduling and control as well as on data-driven methods.
	\end{IEEEbiography}

	\begin{IEEEbiography}[{\includegraphics[width=1in,height=1.25in,clip,keepaspectratio]{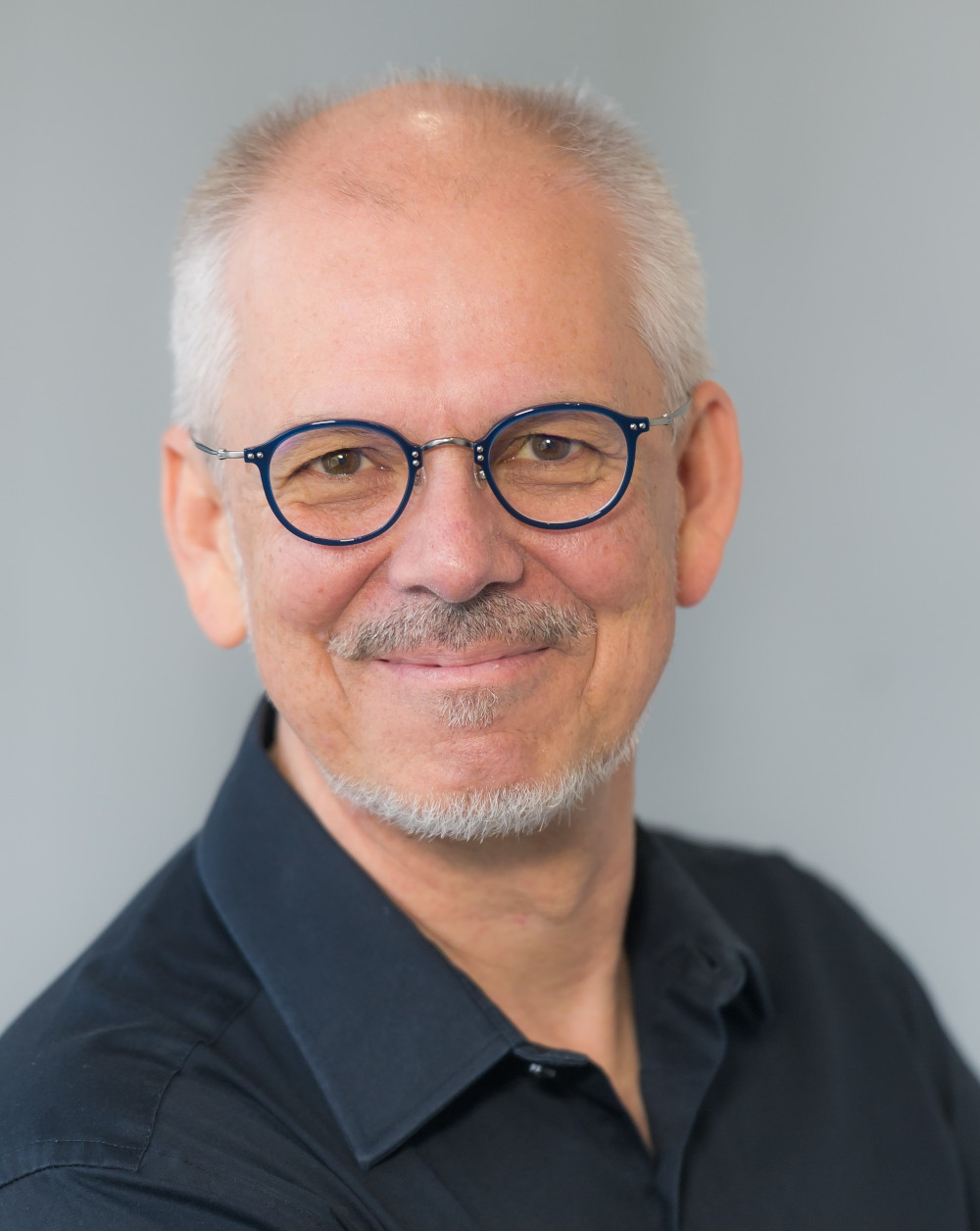}}]{Frank Allg\"ower} is professor of mechanical engineering at the University of Stuttgart, Germany, and Director of the Institute for Systems Theory and Automatic Control (IST) there.
		
		He is active in serving the community in several roles: Among others he has been President of the International Federation of Automatic Control (IFAC) for the years 2017-2020, Vicepresident for Technical Activities of the IEEE Control Systems Society for 2013/14, and Editor of the journal Automatica from 2001 until 2015. From 2012 until 2020 he served in addition as Vice-president for the German Research Foundation (DFG), which is Germany’s most important research funding organization.
		
		His research interests include predictive control, data-based control, networked control, cooperative control, and nonlinear control with application	to a wide range of fields including systems biology.
	\end{IEEEbiography}

\end{document}